\newcommand{\etal}{{et~al.}}
\newcommand{\ie}{{i.e.}}
\newcommand{\eg}{{e.g.}}
\newcommand{\geod}{{\rm geod}}
\newcommand{\greedy}{{\rm greedy}}
\newcommand{\conv}{{\rm conv}}
\newcommand{\per}{{\rm per}}
\newcommand{\dist}{{\rm dist}}
\newcommand{\diam}{{\rm diam}}
\newcommand{\width}{{\rm width}}
\newcommand{\NN}{\mathbb{N}} 
\newcommand{\RR}{\mathbb{R}} 
\newcommand{\eps}{\varepsilon}
\def\A{\mathcal A}
\def\C{\mathcal C}
\def\O{\mathcal O}
\newcommand{\later}[1]{}
\newcommand{\old}[1]{}
\newtheorem{theorem}{Theorem}
\newtheorem{lemma}{Lemma}
\newtheorem{corollary}{Corollary}
\newtheorem{proposition}{Proposition}
\begin{document}

\title{Maximal Distortion of Geodesic Diameters\\ in Polygonal Domains\footnote{A preliminary version of this paper appeared in the \emph{Proceedings of the 34th International Workshop on Combinatorial Algorithms (IWOCA)}, LNCS~13889, Springer, 2023, pp.~197--208.}}

\author{Adrian Dumitrescu\thanks{Algoresearch L.L.C., Milwaukee, WI, USA, and 
Research Institute of the University of Bucharest, Romania, and 
Alfr\'ed R\'enyi Institute of Mathematics, Budapest, Hungary. 
Email: \texttt{ad.dumitrescu@algoresearch.org}}
\and
Csaba D. T\'oth\thanks{California State University Northridge, Los Angeles, CA and Tufts University, Medford, MA, USA. Email: \texttt{csaba.toth@csun.edu}}
}
\date{}
\maketitle              

\begin{abstract}
For a polygon $P$ with holes in the plane, we denote by $\varrho(P)$ the ratio between the geodesic and the Euclidean diameters of $P$. It is shown that over all convex polygons with 
$h$~convex holes, the supremum of $\varrho(P)$ is between $\Omega(h^{1/3})$ and $O(h^{1/2})$. 
The upper bound improves to $\varrho(P)\leq O(1+\min\{h^{3/4}\Delta,h^{1/2}\Delta^{1/2}\})$
if the Euclidean diameter of every hole is most $\Delta$ times the Euclidean diameter of $P$; and to $O(1)$ if every hole is a \emph{fat} convex polygon.
Furthermore, we show that the function $g(h)=\sup_P \varrho(P)$ over convex polygons with $h$ convex holes has the same growth rate as an analogous quantity over geometric triangulations with $h$ vertices when $h\rightarrow \infty$. 
%
\end{abstract}

\section{Introduction} \label{sec:intro}

Determining the maximum distortion between two metrics on the same ground set is a fundamental problem in metric geometry. In this paper, we study the maximum ratio between the geodesic (i.e., shortest path) diameter and the Euclidean diameter over polygons with holes. 
A \emph{polygon $P$ with $h$ holes} (also known as a \emph{polygonal domain}) is defined as follows. Let $P_0$ be a simple polygon, and let $P_1,\ldots , P_h$ be pairwise disjoint simple polygons in the interior of $P_0$. Then $P=P_0\setminus \left(\bigcup_{i=1}^h P_i\right)$. 

The Euclidean distance between two points $s,t\in P$ is $|st|=\|s-t\|_2$, and the shortest path distance $\geod(s,t)$ is the minimum arclength of a polygonal path between $s$ and $t$ contained in $P$. The triangle inequality implies that $|st|\leq \geod(s,t)$ for all $s,t\in P$. The \emph{geometric dilation} of $P$ (also known as the \emph{stretch factor}) is $\sup_{s,t\in P} \geod(s,t)/|st|$. The geometric dilation of $P$ can be  arbitrarily large, even if $P$ is a (nonconvex) quadrilateral. 

The \emph{Euclidean diameter} of $P$ is $\diam_2(P) = \sup_{s,t\in P}|st|$ and its \emph{geodesic diameter} is $\diam_g(P) =  \sup_{s,t\in P} \geod(s,t)$. 
It is clear that $\diam_2(P)\leq \diam_g(P)$. We are interested in the \emph{distortion}
\begin{equation}\label{eq:rho}
\varrho(P)=\frac{\diam_g(P)}{\diam_2(P)}.
\end{equation}
Note that $\varrho(P)$ is unbounded, even for simple polygons. In fact, it is easy to show that $\varrho(P)\leq O(n)$ for every simple polygon $P$ with $n$ vertices, and this bound is the best possible. For a lower bound, consider a zig-zag polygon $P$, bounded by two $x$-monotone polygonal chains with unit-length edges and slopes roughly $\pm n$. Then $\diam_g(P) = \Omega(n)$, and since $P$ fits in an axis-aligned unit square, then we have $\diam_2(P)\leq O(1)$, which yields $\varrho(P)\geq \Omega(n)$. For the upper bound, note that the perimeter of a simple polygon with $n$ vertices is bounded by $|\partial P|\leq n\cdot \diam_2(P)$. For every $s,t\in P$, we can construct an $st$-path of length at most $2\,\diam_2(P)+\frac12\, |\partial P|\leq O(n\cdot \diam_2(P))$ by connecting each point to a closest point on the boundary, and these closest points are connected along $\partial P$.

In this paper, we consider convex polygons with convex holes. Specifically, let $\C(h)$ denote the family of polygonal domains $P=P_0\setminus \left(\bigcup_{i=1}^h P_i\right)$, where $P_0, P_1,\ldots , P_h$ are convex polygons;  and let 
\begin{equation}\label{eq:g}
g(h)=\sup_{P\in \C(h)} \varrho(P).
\end{equation}
It is clear that if $h=0$, then $\geod(s,t)=|st|$ for all $s,t\in P$, which implies $g(0)=1$. Our main result is the following.
\begin{theorem} \label{thm:main}
For every $h\in \NN$, we have $\Omega(h^{1/3})\leq g(h)\leq O(h^{1/2})$.
\end{theorem}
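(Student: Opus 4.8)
The plan is to prove the two inequalities separately. Throughout I normalize so that $\diam_2(P)=1$; since $P_0$ is convex and contains $P$, we have $\diam_2(P)=\diam_2(P_0)=1$, and by Jung's theorem $P$ lies in a disk of radius $O(1)$. For the upper bound it then suffices to show that every pair $s,t\in P$ satisfies $\geod(s,t)=O(h^{1/2})$, and for the lower bound to exhibit one domain in $\C(h)$ together with two points realizing distortion $\Omega(h^{1/3})$.

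\textbf{Lower bound.} I would force the geodesic to ``snake'' through many layers. Take $P_0$ to be a unit square and fix a parameter $m$; the goal is a configuration in which every $s$--$t$ path must traverse $\Theta(m)$ corridors of length $\Theta(1)$, so that $\geod(s,t)=\Omega(m)$ while $|st|=\Theta(1)$. The obstacle specific to this model is that a single convex hole can never act as an impermeable wall: lying in the interior of the convex $P_0$, it leaves a ``leak'' channel near $\partial P_0$ at \emph{both} of its ends, through which a path could short-cut. Hence each wall separating two corridors must be assembled from several holes, and the leak at its unused end must be sealed --- but the sealing holes again leak, so the seal must be repeated at finer scale near the boundary. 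The crux of the construction is to arrange these seals so generously that \emph{no} $s$--$t$ route (neither the intended snake nor any path attempting to leak along the boundary channels) is shorter than $\Omega(m)$, while keeping the total number of holes at $\Theta(m^3)$: roughly $\Theta(m)$ walls (or $\Theta(m^2)$ gates in a grid version) each sealed by $\Theta(m)$ auxiliary holes. Setting $h=\Theta(m^3)$ then yields distortion $\Omega(m)=\Omega(h^{1/3})$. The real work is verifying that the sealing admits no shortcut and bookkeeping the hole count; the cubic (rather than quadratic) count is exactly the slack that leaves the final gap open.

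\textbf{Upper bound.} Fix a shortest path $\pi$ between the two geodesic-diameter points and let $L=\len(\pi)$. Since $\pi$ is taut, whenever it leaves a line and later returns to it (a ``fold''), the chord along that line cannot be free, so some hole blocks it; more generally, every turn of $\pi$ hugs a convex hole. I would first reduce the task to bounding the number $F$ of folds of $\pi$: the straight bridges between consecutive folds each have length $\le 1$, and the arcs along which $\pi$ hugs holes have length $\le \per(H_i)\le\pi\,\diam(H_i)=O(1)$, so that $L=O(1+F)$ after this accounting. The heart of the argument is then a converse to the lower-bound construction: forcing $F$ folds of a \emph{shortest} path inside a convex domain of diameter $1$ with convex interior holes requires $\Omega(F^2)$ distinct holes, because the two ends of each fold must be sealed and a packing/charging argument shows the seals cannot be shared across folds too heavily. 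Combining, $F=O(h^{1/2})$ and hence $L=O(h^{1/2})$.

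\textbf{Main obstacle.} The hard step is the quadratic bound $h=\Omega(F^2)$. The naive estimates --- summing hole perimeters, bounding the total turning of $\pi$, or a Cauchy--Crofton count of line crossings --- all give only $L=O(h)$, because a family of $h$ thin ``needle'' holes can have total perimeter and total diameter $\Theta(h)$. Beating this down to $O(h^{1/2})$ requires exploiting \emph{simultaneously} the disjointness of the holes and the global tautness of $\pi$, so that needles cannot all force folds without being sealed at their ends by yet further holes. Turning this ``sealing is expensive'' intuition into a rigorous injective/packing charge is, I expect, where essentially all the difficulty lies; the matching difficulty on the construction side is precisely what produces the $h^{1/3}$--$h^{1/2}$ gap asserted by the theorem.
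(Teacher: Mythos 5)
Your upper bound rests entirely on the deferred packing lemma ``forcing $F$ folds of a shortest path requires $\Omega(F^2)$ holes,'' and this lemma is false --- it is refuted by the paper's own lower-bound construction. There, $P_0$ is a regular $k$-gon of unit diameter and the $h=k^3$ holes are thin rectangles placed along the edges of $k^2$ nested regular $k$-gons $Q_1,\ldots,Q_{k^2}$ (each inscribed with its vertices at the midpoints of the edges of the previous one), leaving $\eps$-gaps at the vertices. A shortest path from the center to the boundary must thread a vertex gap of every layer, and since consecutive layers' gaps are angularly offset, it turns at $\Theta(k^2)=\Theta(h^{2/3})$ holes while having length only $\Theta(k)=\Theta(h^{1/3})$; thus $F=\Theta(h^{2/3})\gg h^{1/2}$ and $h=\Theta(F^{3/2})$, not $\Omega(F^2)$. (If you instead define ``fold'' so sparsely that this path has few folds, then the bridges between folds are no longer straight segments and your accounting $L=O(1+F)$ collapses; either way the skeleton breaks.) The paper's actual mechanism is not that a shortest path has few turns but that turns are individually cheap: it constructs a \emph{greedy} escape path from $s$ along which the distance from $s$ increases monotonically, alternating radial segments (total length at most $\diam_2(P)$, by monotonicity) with at most $2h$ hole-hugging arcs (each arc ends at a tangent point, so at most two arcs per hole). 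An arc $\widehat{pq}$ whose radial progress is $\Delta_{pq}=|sq|-|sp|$ has length $O\bigl(\sqrt{\diam_2(P)\,\Delta_{pq}}+\Delta_{pq}\bigr)$, and since $\sum_{\widehat{pq}}\Delta_{pq}\le\diam_2(P)$, Jensen's inequality bounds the total arc length by $O(\sqrt{h})\cdot\diam_2(P)$. Two such escape paths joined along $\partial P_0$ (whose perimeter is at most $\pi\,\diam_2(P_0)$) give $\diam_g(P)\le O(h^{1/2})\,\diam_2(P)$. Note that this length budget --- square root of radial progress per arc --- is compatible with $\Theta(h)$ turns, which is exactly why counting folds cannot work.

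Your lower bound is likewise only a plan, not a proof: you correctly identify that convex holes inside a convex $P_0$ ``leak'' near the outer boundary and that walls must be sealed recursively, but you explicitly defer ``the real work'' of verifying that the seals admit no shortcut, which is the entire content of the bound. The paper's construction sidesteps sealing altogether: every crossing of layer $Q_i$ must pass within $\eps$ of a vertex of $Q_i$, the vertices of $Q_{i+1}$ are the midpoints of the edges of $Q_i$, so consecutive forced waypoints are at distance at least $\tfrac12|e_i|-2\eps=\Omega(1/k)$ apart (edge lengths remain $\Omega(1/k)$ across all layers since $\cos^{k^2}(\pi/k)=\Omega(1)$), and summing over $k^2$ layers gives $\geod(s,t)\ge\Omega(k)=\Omega(h^{1/3})$. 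In other words, the paper forces many cheap detours ($k^2$ detours of length $\Theta(1/k)$) rather than your few unit-length corridors, and needs no auxiliary sealing holes whose analysis you left open. As written, both halves of the proposal have genuine gaps, and the upper-bound half rests on a lemma that the theorem's own tight example contradicts.
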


The lower bound construction is a polygonal domain in which all $h$ holes have about the same diameter $\Theta(h^{-1/3})\cdot \diam_2(P)$. We prove a matching upper bound for all polygons $P$ with holes of diameter $\Theta(h^{-1/3})\cdot \diam_2(P)$. In general, if the diameter of every hole is $o(1)\cdot \diam_2(P)$, we can improve upon the bound $g(h)\leq O(h^{1/2})$ in Theorem~\ref{thm:main}.

\begin{theorem} \label{thm:Delta}
If $P\in \C(h)$ and the diameter of every hole is at most $\Delta\cdot \diam_2(P)$, then $\varrho(P)\leq O(1+\min\{h^{3/4}\Delta,h^{1/2}\Delta^{1/2}\})$.
In particular for $\Delta=O(h^{-1/3})$, we have $\varrho(P)\leq O(h^{1/3})$.
\end{theorem}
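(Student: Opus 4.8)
The plan is to normalize $\diam_2(P)=1$, so that $P$ is contained in a disk of radius $1$ and every hole has diameter at most $\Delta$. Fix a pair $s,t$ with $\geod(s,t)=\diam_g(P)$ and let $\gamma$ be a shortest $s$--$t$ path. Since all holes are convex, $\gamma$ is a polygonal path that alternates between straight segments of the free space and maximal sub-chains hugging hole boundaries, so I can write $\len(\gamma)=|st|+E$ with $|st|\le 1$ and $E$ the total detour. The holes hugged by $\gamma$ form a subfamily $\mathcal H_\gamma$ of size $k\le h$, and the detour charged to each is at most its half-perimeter; for a convex set of diameter at most $\Delta$ the perimeter is at most $\pi\Delta$, so each hole contributes at most $\tfrac{\pi}{2}\Delta$. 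This yields only the trivial estimate $E\le\tfrac{\pi}{2}k\Delta=O(h\Delta)$, and Theorem~\ref{thm:main} independently gives $E=O(h^{1/2})$; since the target $O(1+\min\{h^{3/4}\Delta,h^{1/2}\Delta^{1/2}\})$ lies strictly below both of these for $\Delta=o(1)$, the content of the theorem is a genuinely new interpolation between them.

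The main device I would use is a localized, rescaled form of Theorem~\ref{thm:main}. A crucial and valid observation is that any sub-path of $\gamma$ between two of its points $a,b$ is itself a shortest $a$--$b$ path in $P$, so its length equals $\geod(a,b)$; if $K$ is any convex region containing this sub-path, then $\geod(a,b)$ is at most the geodesic distance between $a$ and $b$ inside $K$ with the holes it contains, hence at most $\diam_g(K)=O(\diam_2(K)\sqrt{m})$ by Theorem~\ref{thm:main}, where $m$ is the number of holes inside $K$. Taking $K$ to be the convex hull of a sub-path of Euclidean extent at most $\rho$ shows that such a piece has length $O(\rho\sqrt{m})$, where $m$ counts holes within a disk of radius $O(\rho)$. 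I would cut $\gamma$ greedily into $M$ pieces of extent $\rho$ (so each piece has length at least $\rho$, giving $M\le\len(\gamma)/\rho$), and combine the per-piece bounds by Cauchy--Schwarz. The role of $\Delta$ is to set the finest useful scale, $\rho\gtrsim\Delta$: below this scale a piece can meet essentially a single hole it is hugging and contributes no detour. The extra geometric ingredient that beats the $O(\sqrt h)$ bound is that a shortest path can always \emph{escape around a cluster} of small convex holes at cost proportional to the cluster's diameter; consequently only holes that are sufficiently \emph{separated} from one another can each force an $\Omega(\Delta)$ detour, and a packing bound on the number of such separated holes inside a unit disk, balanced against their per-hole detour, produces the two regimes. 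Optimizing $\rho\in[\Delta,1]$ then yields $h^{3/4}\Delta$ for one choice and $h^{1/2}\Delta^{1/2}$ for the other, and the theorem follows by taking the smaller; the final clause $\varrho(P)=O(h^{1/3})$ at $\Delta=O(h^{-1/3})$ is immediate from the second term.

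The hard part will be making the ``escape around a cluster'' phenomenon quantitative, that is, bounding how many holes of diameter at most $\Delta$ can each impose a near-maximal $\Omega(\Delta)$ detour on a single shortest path confined to a region of diameter $1$. Equivalently, I must control how often the winding geodesic can return to the neighborhood of a given small hole, so that the overlap factor $\sum_j m_j$ in the Cauchy--Schwarz step stays near-linear in $h$ rather than growing with the winding. This is the step where convexity of both $P_0$ and the holes and the diameter bound $\Delta$ are genuinely used, and it is what distinguishes the present bound from the naive $O(h\Delta)$ and from the scale-free $O(\sqrt h)$ of Theorem~\ref{thm:main}; I expect the precise separation/packing estimate to dictate the crossover at $\Delta=h^{-1/2}$ between the two terms of the minimum.
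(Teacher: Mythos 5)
Your proposal has a genuine gap, and it sits exactly where you flag it yourself: the ``escape around a cluster'' separation--packing lemma is not a routine step to be filled in later --- it is, in effect, the whole content of the theorem, and nothing in your outline supplies it. Worse, even if you were \emph{granted} the overlap control you hope for, say $\sum_j m_j = O(h)$ over the windows, your aggregation scheme provably cannot produce the claimed exponents. Greedy cutting gives $M \le \len(\gamma)/\rho$ pieces, each of length $O(\rho\sqrt{m_j})$ by your localized bound, so Cauchy--Schwarz yields $\len(\gamma) \le O\bigl(\rho\sqrt{M\textstyle\sum_j m_j}\bigr) \le O\bigl(\sqrt{\rho\,\len(\gamma)\,h}\bigr)$, i.e.\ $\len(\gamma)=O(\rho h)$. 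At the finest admissible scale $\rho=\Theta(\Delta)$ this is exactly the trivial $O(h\Delta)$ bound you started from, and larger $\rho$ only makes it worse; no choice of $\rho\in[\Delta,1]$ ``yields $h^{3/4}\Delta$ for one choice and $h^{1/2}\Delta^{1/2}$ for the other'' --- those exponents never emerge from any computation in the proposal. Two auxiliary claims are also shaky: a window of extent $\rho\approx\Delta$ can meet arbitrarily many holes, since small holes can cluster (so ``essentially a single hole'' below scale $\Delta$ is false), and localizing Theorem~\ref{thm:main} to a convex window $K$ is not immediate, because holes crossing $\partial K$ take you outside $\C(m)$ (this is fixable --- the greedy-path argument of Lemma~\ref{lem:upper} survives clipping --- but it must be said).

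For comparison, the paper proves the two regimes by two explicit path constructions, neither of which analyzes the optimal geodesic or needs any packing lemma; both find sparse \emph{escape structures} by pigeonhole and bound the distance from an arbitrary $s\in P$ to $\partial P_0$, closing up along the convex outer boundary (length at most $\pi\,\diam_2(P_0)$). For $O(1+h^{3/4}\Delta)$ (Lemma~\ref{lem:Delta}): among $\ell$ horizontal and $\ell$ vertical lines spaced $\Delta$ apart, each hole meets at most one of each, so four lines exist that each cross at most $h/\ell$ holes; run the greedy path of Lemma~\ref{lem:upper} inside the box they bound (cost $O(h^{1/2}\ell\Delta)$ --- this is the correct localization of Theorem~\ref{thm:main}, close in spirit to your idea), then walk along a sparse line making perimeter detours (cost $O(1+h\Delta/\ell)$), and balance with $\ell=\lceil h^{1/4}\rceil$. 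For $O(h^{1/2}\Delta^{1/2})$ (Lemmas~\ref{lem:segment} and~\ref{lem:Delta2}): replace each hole by a diametral segment $a_ib_i$, pigeonhole the segment \emph{directions} into $\ell=\lceil h^{1/2}\Delta^{1/2}\rceil$ wedges, and build a monotone path perpendicular to the axis of the sparsest wedge: edges along segments of that class total at most $(h/\ell)\Delta$, while every other edge projects onto the monotone direction with factor at least $\sin(\pi/(2\ell))$, so those edges total $O(\ell)$; finally convert segment obstacles back into convex holes at a constant factor via the arc estimate of Lemma~\ref{lem:tech}. If you want to rescue your plan, the directional pigeonhole is the missing ingredient to look for: your scheme is scale-based and isotropic, whereas the $h^{1/2}\Delta^{1/2}$ regime is won by choosing a good \emph{direction}, not a good scale.
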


However, if we further restrict the holes to be \emph{fat} convex polygons, we can show that $\varrho(P)= O(1)$ for all $h\in \NN$. 
In fact for every $s,t \in P$, the distortion $\geod(s,t)/|st|$ is also bounded by a constant. 

Informally, a convex body is \emph{fat} if its width is comparable with its diameter.
The \emph{width} of a convex body $C$ is the minimum width of a parallel slab enclosing~$C$. 
For every $\lambda \in (0,\lambda]$, a convex body $C$ is $\lambda$-\emph{fat} if the ratio of its width to its diameter is at least $\lambda$, that is, $\width(C)/\diam_2(C)\geq \lambda$; and $C$ is \emph{fat} if the inequality holds for a constant $\lambda\in (0,1]$. 
For instance, a disk is 1-fat, a $3 \times 4$ rectangle is $\frac35$-fat and a line segment is $0$-fat.
Let $\mathcal{F}_\lambda(h)$ be the family of polygonal domain $P=P_0\setminus \left(\bigcup_{i=1}^h P_i\right)$, where $P_0$ is a convex polygon and $P_1,\ldots , P_h$ are $\lambda$-fat convex polygons.

\begin{proposition} \label{prop:fat}
For $P\in \mathcal{F}_\lambda(h)$, where $h\in \NN$ and $\lambda\in (0,1]$, we have 
$\varrho(P)\leq O(\lambda^{-1})$. 
\end{proposition}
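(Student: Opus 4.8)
The plan is to bound the \emph{pointwise} distortion $\geod(s,t)/|st|$ for every pair $s,t\in P$ by $O(\lambda^{-1})$, which immediately yields the same bound on $\varrho(P)$. Fix $s,t$ and consider the straight segment $\sigma=st$. Since each hole $P_1,\dots,P_h$ is convex, $\sigma$ meets every hole it crosses in a single chord; let $x_jy_j=\sigma\cap P_{i_j}$, for $j=1,\dots,k$, be these chords, of lengths $\ell_j=|x_jy_j|$. As the chords are disjoint subsegments of $\sigma$, we have $\sum_j\ell_j\le|st|$. I would replace each chord by a detour along $\partial P_{i_j}$, following the boundary arc $\gamma_j$ from $x_j$ to $y_j$ of smaller total turning (the integral of curvature, counting exterior angles at vertices). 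This produces a path from $s$ to $t$ in the closure of $P$ (the arcs lie on $\partial P$, so no detour re-enters a hole) of length $(|st|-\sum_j\ell_j)+\sum_j\len(\gamma_j)$. Hence, if the single-chord bound $\len(\gamma_j)\le C\,\ell_j$ holds with $C=O(\lambda^{-1})$, then $\geod(s,t)\le|st|+(C-1)\sum_j\ell_j\le C\,|st|$, as required. Everything thus reduces to a claim about one $\lambda$-fat hole.

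The claim is: for a $\lambda$-fat convex body $C$ and a chord $xy$, the boundary arc $\gamma$ from $x$ to $y$ of smaller total turning satisfies $\len(\gamma)\le O(\lambda^{-1})\,|xy|$. I would prove it from two elementary estimates together with one structural consequence of fatness. The first is a projection bound: orienting $\gamma$ from $x$ to $y$, its tangent directions sweep an interval of angular width $\tau\le\pi$, so the bisecting direction $v$ of this interval makes angle at most $\tau/2\le\pi/2$ with every tangent, and integrating $\langle\,\text{tangent},v\rangle\ge\cos(\tau/2)$ along $\gamma$ gives $\cos(\tau/2)\,\len(\gamma)\le\langle y-x,v\rangle\le|xy|$, i.e.\ $\len(\gamma)\le|xy|/\cos(\tau/2)$. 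The second is the convex perimeter bound $\per(C)\le\pi\,\diam_2(C)$, which yields the crude $\len(\gamma)\le\pi\,\diam_2(C)$. The structural fact is that $\lambda$-fatness forces every interior angle $\theta$ of $C$ to satisfy $\tan(\theta/2)\ge\lambda/2$: at a vertex of angle $\theta$ the body lies in a wedge of opening $\theta$, so its extent perpendicular to the wedge bisector, hence its $\width$, is at most $2\diam_2(C)\tan(\theta/2)$, and $\width(C)\ge\lambda\,\diam_2(C)$ gives the bound.

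The crux is the dichotomy $(\star)$: for every chord $xy$ of a $\lambda$-fat $C$, either $|xy|\ge\tfrac{\lambda}{2}\diam_2(C)$, or the smaller-turning arc has $\tau\le\pi-2\arctan(\lambda/2)$. Granting $(\star)$, the claim follows: in the first case the perimeter bound gives $\len(\gamma)\le\pi\,\diam_2(C)\le\tfrac{2\pi}{\lambda}|xy|$; in the second, $\cos(\tau/2)\ge\sin(\arctan(\lambda/2))\ge\lambda/\sqrt5$, so the projection bound gives $\len(\gamma)\le\sqrt5\,\lambda^{-1}|xy|$. To prove $(\star)$ I would work with the boundary tangents at the endpoints of $\gamma$: writing $\beta_x,\beta_y$ for the angles the chord makes with them, convexity gives $\tau=\beta_x+\beta_y$, so $\tau$ near $\pi$ forces both angles near $\pi/2$, i.e.\ the chord meets $\partial C$ almost perpendicularly at both ends. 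Then the two supporting lines at $x$ and $y$ are nearly parallel and $C$ lies in the slab between them, whose width is at least $\width(C)\ge\lambda\,\diam_2(C)$; since $x,y$ lie on the two bounding lines, $|xy|$ is at least this slab width, forcing a long chord. Quantifying this trade-off so that a short chord provably keeps $\tau$ below $\pi-2\arctan(\lambda/2)$ --- matching the sharp-corner extremal case $\tau=\pi-\theta$ --- is the one delicate step: the sharp-corner case (where fatness is essential) and the smooth diameter-chord case (where $\tau\to\pi$ but the chord is long) are exactly the extremes that $(\star)$ must interpolate. I expect reconciling these two regimes into a single constant $C=O(\lambda^{-1})$ to be the main obstacle, with the interior-angle bound the indispensable ingredient that rules out the only configuration --- a sharp reversal of the boundary --- capable of making $\len(\gamma)/|xy|$ blow up.
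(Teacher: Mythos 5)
Your outer reduction is exactly the paper's: split the segment $st$ into chords $p_jq_j$ cut by the (convex, pairwise disjoint) holes, replace each chord by a boundary detour, and sum using $\sum_j \ell_j\le |st|$; this correctly reduces everything to a single-hole dilation bound, and it even recovers the stronger pointwise statement $\geod(s,t)\le O(\lambda^{-1})|st|$ that the paper remarks on. The divergence is at the single-hole estimate, and there your argument has a genuine, self-acknowledged gap: the dichotomy $(\star)$ is the entire content of the claim $\len(\gamma)\le O(\lambda^{-1})|xy|$, and you do not prove it --- you describe the two extremal regimes and state that reconciling them is ``the main obstacle.'' Worse, the one concrete step you offer toward $(\star)$ is false as stated: from $\tau=\beta_x+\beta_y$ near $\pi$ you conclude that \emph{both} angles are near $\pi/2$, but only the sum is controlled; $\beta_x$ can be near $\pi$ and $\beta_y$ near $0$ (a chord leaving one endpoint almost tangentially), so ``the chord meets $\partial C$ almost perpendicularly at both ends'' does not follow. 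The supporting-line idea is salvageable, because near-parallelism of the two supporting lines depends only on the sum: the lines at $x$ and $y$ confine $C$ to a wedge of opening $\pi-\tau$, and bounding the apex distance (law of sines, then $\width(C)\ge\lambda\,\diam_2(C)$) does yield $|xy|\ge \Omega(\lambda)\,\diam_2(C)$ once $\tau\ge \pi-c\lambda$ for a suitable constant $c$ --- but this wedge-versus-slab quantification, with the apex possibly far away, is precisely the step you leave open, and your proposed threshold $\pi-2\arctan(\lambda/2)$, calibrated to the sharp-corner case, is never verified in the smooth regime. Your auxiliary ingredients (the projection bound $\len(\gamma)\le |xy|/\cos(\tau/2)$ for the smaller-turning arc, the perimeter bound, and the vertex-angle bound from fatness) are all correct.

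For comparison, the paper closes exactly this step by citation rather than from first principles: the geometric dilation of a closed convex curve satisfies $\delta(C)=\per(C)/(2h)$, where $h$ is the minimum halving distance \cite{EGK07}, and $h\ge \width(C)/2$ \cite{DEG+07}, whence $\geod(x,y)\le \frac{\per(C)}{\width(C)}\,|xy|\le \pi\lambda^{-1}|xy|$ for every boundary pair, in two lines (Lemma~\ref{lem:dilation}); Corollary~\ref{cor:dilation} is then your chord-replacement argument verbatim. So the statement you reduce to is true, and your route to it is plausible and genuinely more elementary, but as written the proposal proves the proposition only modulo the unproven dichotomy $(\star)$.
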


The special case when all holes are axis-aligned rectangles is also easy.

\begin{proposition} \label{prop:axisaligned}
Let $P\in \C(h)$, $h\in \NN$, such that all holes are axis-aligned rectangles. 
Then $\varrho(P)\leq O(1)$. 
\end{proposition}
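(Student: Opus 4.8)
The plan is to bound $\diam_g(P)$ by $O(\diam_2(P))$ directly, by routing every geodesic through the ``clear collar'' that necessarily surrounds the holes. Since $P_0$ is convex and the axis-aligned rectangular holes $P_1,\dots,P_h$ lie in its interior, the quantity $\delta=\min_{1\le i\le h}\dist(P_i,\partial P_0)$ is positive, so the boundary collar $N=\{x\in P_0:\dist(x,\partial P_0)<\delta\}$ contains no point of any hole and is a clear, connected, annular region. Any two points of $\partial P_0$ can therefore be joined inside $N$ by a curve of length at most the perimeter of $P_0$, which for a convex body is at most $\pi\,\diam_2(P_0)=\pi\,\diam_2(P)$ (e.g.\ an inner parallel curve at distance $\delta/2$). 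Thus it suffices to reach the collar cheaply from an arbitrary point. (If the $\delta$-erosion of $P_0$ is empty, then no hole fits and $P=P_0$ is convex, so $\varrho(P)=1$; we may assume otherwise.)

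The key ingredient is an \emph{escape lemma}: for every $p\in P$ there is an $xy$-\emph{monotone} path (non-decreasing in both coordinates) from $p$ to a point of $\partial P_0$, contained in $P$. Granting this, for arbitrary $s,t\in P$ I would join $s$ to some $s'\in\partial P_0$ and $t$ to some $t'\in\partial P_0$ by such monotone paths, and then join $s'$ to $t'$ inside the collar $N$. A monotone path stays in the axis-aligned bounding box of its endpoints, so $\sqrt{dx^2+dy^2}\le dx+dy$ integrates to a length at most the sum of the $x$- and $y$-extents of $P_0$, hence at most $2\,\diam_2(P)$. Summing the three pieces gives $\geod(s,t)\le 4\,\diam_2(P)+\pi\,\diam_2(P)=O(\diam_2(P))$ for all $s,t\in P$, so that $\diam_g(P)=O(\diam_2(P))$ and $\varrho(P)=O(1)$.

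The heart of the argument, and the step I expect to be the main obstacle, is the escape lemma itself, i.e.\ that a monotone path to the boundary is never blocked. The point is that a family of \emph{pairwise disjoint} axis-aligned rectangles cannot form a barrier stopping all up-right monotone paths: such a barrier would have to be an \emph{anti-monotone} staircase (descending from upper-left to lower-right) separating $p$ from the upper-right portion of $\partial P_0$, but at each joint of such a staircase two disjoint rectangles meet only across a positive gap, and that gap is itself an up-right ``doorway'' through which a monotone path slips. I would formalize this with a frontier argument: among all points reachable from $p$ by a monotone path in $P$, pick one, $q$, maximizing $x+y$. If $q\notin\partial P_0$ and $q$ were blocked both rightward and upward, then a hole $H_r$ would occupy the region immediately to the right and a hole $H_u$ the region immediately above; disjointness of $H_r$ and $H_u$ forbids both from covering the open up-right quadrant at $q$ (they would otherwise overlap), so a short monotone step (right along a bottom edge, up along a left edge, or into the gap) reaches a reachable point of strictly larger $x+y$, contradicting maximality. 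Hence $q\in\partial P_0$, proving the lemma; only degenerate edge-contacts need routine care.
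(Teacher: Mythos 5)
Your proposal is correct and takes essentially the same route as the paper: the paper's proof likewise builds an $xy$-monotone staircase escape path (start from $s$ toward the nearest bounding-box side, making a $90^\circ$ turn whenever a rectangle is hit), whose length is bounded by the box extents --- at most $\frac{a+b}{2}\le \diam_2(P)$ after a choice of quadrant --- and then closes up along the convex outer boundary of perimeter at most $\pi\,\diam_2(P)$. The only differences are cosmetic: you establish the escape lemma nonconstructively via a maximal-$(x+y)$ frontier argument where the paper just runs the greedy staircase, and your collar construction is superfluous, since the holes lie in the interior of $P_0$ and hence $\partial P_0\subseteq P$, so one can walk along $\partial P_0$ directly.
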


\paragraph{Triangulations.} In this paper, we focus on the diameter distortion $\varrho(P)=\diam_g(P)/\diam_2(P)$ for polygons $P\in \C(h)$ with $h$ holes. 
Alternatively, we can compare the geodesic and Euclidean diameters in triangulations of $n$ points in the plane.
In a \emph{geometric graph} $G=(V,E)$, the vertices are distinct points in the plane, and the edges are straight-line segments between pairs of vertices. The \emph{Euclidean diameter} of $G$, 
$\diam_2(G)=\max_{u,v\in V} |uv|$ is the maximum distance between two vertices, and the \emph{geodesic diameter} $\diam_g(G)=\max_{u,v\in V} \dist(u,v)$, where $\dist(u,v)$ is the shortest path distance in $G$, i.e., the minimum Euclidean length of a $uv$-path in $G$.  With this notation, we define $\varrho(G)=\diam_g(G)/\diam_2(G)$, 

A Euclidean triangulation $T=(V,E)$ of a point set $V$ is a planar straight-line graph where all bounded faces are triangles, and their union is the convex hull $\conv(V)$. Let 
\begin{equation}\label{eq:f}
    f(n)=\sup_{G\in \mathcal{T}(n)} \varrho(G),
\end{equation}
where the supremum is taken over the set $\mathcal{T}(n)$ all $n$-vertex triangulations. 
Recall that $g(n)$ is the supremum of diameter distortions over polygons with $n$ convex holes; see~\eqref{eq:g}. We prove that $f(n)$ and $g(n)$ have the same growth rate.

\begin{theorem}\label{thm:tri}
    We have $g(n)=\Theta(f(n))$. 
\end{theorem}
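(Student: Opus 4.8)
The plan is to establish the two reductions $f(n)\le(1+o(1))\,g(2n)$ and $g(n)\le O(1)\cdot f(3n+2)$, which together say that $f$ and $g$ bound one another at a constant-factor-rescaled argument; since both are nondecreasing, this is exactly the assertion that they have the same growth rate. Throughout I use that a graph path of a triangulation is a polygonal curve contained in $\conv(V)$ that never crosses the interior of a triangular face, and that $\diam_2(G)=\diam_2(V)$. For the first reduction (a triangulation simulated by a domain) I would fix a small $\eps>0$ and, in every bounded face $\tau$ of an $n$-vertex triangulation $G$, place the convex hole obtained by scaling $\tau$ about its centroid by the factor $1-\eps$. With $P_0=\conv(V)$, the domain $P$ is then a thin neighbourhood of the edge set, whose free space consists of corridors of width $O(\eps)$ along the edges, joined near the vertices. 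All holes are convex and their number is at most $2n$, so $P\in\C(2n)$; moreover $\diam_2(P)=\diam_2(\conv V)=\diam_2(G)$, the diameter being attained at a pair of hull vertices, which lie in $P$. For the pair $u,v\in V$ realizing $\diam_g(G)$, any $uv$-path in $P$ stays in the corridor network, hence projects to a walk of $G$ of length at least $\dist(u,v)$ minus the total corner-cutting slack near vertices, which is $O(\eps)\cdot\diam_2(G)$. Thus $\varrho(P)\ge(1-o(1))\varrho(G)$, and letting $\eps\to0$ gives $f(n)\le(1+o(1))\,g(2n)$.

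For the second reduction (a domain simulated by a triangulation), let $P=P_0\setminus\bigcup_{i=1}^h P_i\in\C(h)$ and let $s,t$ realize $\diam_g(P)$. In each hole I would inscribe a maximum-area triangle $T_i$. The maximality forces the homothetic copy of $T_i$ of ratio $2$ about its centroid to contain $P_i$, so $T_i$ is \emph{comparable} to $P_i$ in the sense $T_i\subseteq P_i\subseteq 2T_i$; moreover the $T_i$ are pairwise disjoint, lie in $P_0$, and have $s,t$ outside their interiors (as $s,t\notin\interior(P_i)$). Put $V=\{s,t\}\cup\bigcup_i\{\text{vertices of }T_i\}$, so $|V|\le 3h+2$, and let $G$ be any triangulation of $\conv(V)$ having each $T_i$ as a single face (triangulate $\conv(V)\setminus\bigcup_i\interior(T_i)$, a convex region with $h$ triangular holes, using only the points of $V$, and add back the faces $T_i$). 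Since every graph path of $G$ avoids the interiors of the faces $T_i$ and stays in $\conv(V)\subseteq P_0$, it is a path of $P'=P_0\setminus\bigcup_i\interior(T_i)$; hence $\diam_g(G)\ge\dist(s,t)\ge\geod_{P'}(s,t)$, while $\diam_2(G)=\diam_2(V)\le\diam_2(P)$. The reduction therefore comes down to comparing the geodesic distances of $P'$ and $P$, whose holes are the comparable pairs $T_i\subseteq P_i$.

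The geometric heart of the argument, and what I expect to be the main obstacle, is the stability lemma: \emph{if two polygonal domains share the same convex outer boundary and have comparable convex holes $T_i\subseteq P_i\subseteq 2T_i$, then $\geod_{P'}(s,t)\ge c\,\geod_{P}(s,t)$ for an absolute constant $c>0$.} To prove it I would take a shortest $st$-path $\pi$ in $P'$ and reroute it into $P$: whenever $\pi$ makes a maximal excursion into some $P_i\setminus T_i$, entering $\partial P_i$ at $a$ and leaving it at $b$, I replace that excursion by the shorter boundary arc of $\partial P_i$ from $a$ to $b$. This arc misses $T_i$ and, by disjointness of the holes, every other $P_j$, so the result is a valid path of $P$. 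The crux is to bound the length of each inserted arc by a constant multiple of the length of the excursion it replaces, and this is precisely where comparability is needed: because $P_i\subseteq 2T_i$, the excursion already has to round the comparable obstacle $T_i$ inside the annulus $2T_i\setminus\interior(T_i)$, whose width is proportional to its diameter, so it cannot be much shorter than the corresponding arc of $\partial P_i$. (The tempting estimate ``arc $\le\frac{\pi}{2}\cdot$ chord'' is false for thin holes, which is exactly why $T_i$ must be kept comparable to $P_i$ rather than shrunk toward a point.) Summing over the disjoint excursions gives $\len(\pi)\ge c\cdot\geod_P(s,t)$, proving the lemma; I expect the careful estimate of the arc/excursion ratio in the annulus to be the technical core.

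Granting the lemma, $\diam_g(G)\ge c\,\diam_g(P)$ and $\diam_2(G)\le\diam_2(P)$ yield $\varrho(G)\ge c\,\varrho(P)$ for every $P\in\C(h)$, hence $g(n)\le C\,f(3n+2)$. Finally I would record that $f$ and $g$ are nondecreasing — appending a negligibly small hole, respectively inserting a vertex next to an existing one, changes the distortion by only $o(1)$. Combined with the two reductions $f(n)\le(1+o(1))g(2n)$ and $g(n)\le C\,f(3n+2)$, this shows that each function is dominated by a constant multiple of the other evaluated at a constant-factor-larger argument, which establishes that $f$ and $g$ have the same growth rate, i.e. $g(n)=\Theta(f(n))$. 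I regard this last bookkeeping as routine next to the stability lemma, which carries the geometric content of the theorem.
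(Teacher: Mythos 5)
Your first reduction (simulating a triangulation by a polygonal domain) is essentially the paper's own Lemma~\ref{lem:tri} --- the paper places in each triangular face a hole whose boundary lies in the $\eps$-neighborhood of the face, does the same corridor bookkeeping, and lets $\eps\to 0$ --- so that half is fine. The genuine gap is in the second reduction, at exactly the step you flag as the technical core: your stability lemma is false as stated. Maximality of the inscribed triangle $T_i$ gives, at each vertex $u$ of $T_i$, only a supporting line of $P_i$ through $u$ parallel to the opposite edge; nothing prevents $\partial P_i$ from carrying a thin convex ``needle'' of length $D$ emanating near $u$ and hugging that support line, with both pockets of $P_i\setminus T_i$ nonempty at $u$. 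Then a path in $P'=P_0\setminus\bigcup_i \interior(T_i)$ can slip through the pinch point $u$ from one pocket to the other, entering $\partial P_i$ at $a$ and leaving at $b$ with $|au|,|ub|\leq\eps$, at cost $O(\eps)$; whereas any path in $P$ between the same two points must round the needle, at cost $\Omega(D)$. So the arc/excursion ratio is $\Omega(D/\eps)$, unbounded; placing $s,t$ just outside $\partial P_i$ on the two sides of this ``door'' already gives $\geod_{P'}(s,t)=O(\eps)$ against $\geod_P(s,t)=\Omega(D)$, so no absolute constant $c$ exists, and nesting such doored holes in layers as in the labyrinth of Lemma~\ref{lem:lower} (doors aligned radially) defeats even the weaker diameter-only comparison your reduction actually needs. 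Comparability $T_i\subseteq P_i\subseteq$ (anti-medial triangle of $T_i$) controls widths and diameters, but it does not close the pinch-point passages. (Two smaller slips: the homothet of $T_i$ containing $P_i$ has ratio $-2$ about the centroid, not $+2$; and segment-like holes admit no nondegenerate inscribed triangle.)

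The paper's Lemma~\ref{lem:delaunay} avoids this precisely by snapping each hole to its \emph{diametral segment} $a_ib_i$ rather than to an inscribed triangle: Lemma~\ref{lem:tech} shows, via the two-disk lens, that every point of $\partial P_i$ sees the diametral chord at angle at least $\pi/3$, whence every replacement arc is at most $2.42$ times the chord it replaces --- even for excursions pivoting at a segment endpoint, which is the analogue of your pinch-point door. An edge of a max-area inscribed triangle has no such angle guarantee (the needle tip sees the chord $ab$ at an arbitrarily small angle), and that is exactly where your rerouting collapses. One part of your plan is worth keeping: by putting $s,t$ into the vertex set and needing only the one-sided bound $\dist_G(s,t)\geq\geod_{P'}(s,t)$, you avoid the paper's appeal to the Bose--Keil stretch theorem for constrained Delaunay triangulations. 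If you replace your triangles $T_i$ by the segments $a_ib_i$ as constrained edges (their $2h$ endpoints plus $s,t$ as vertices) and invoke the Lemma~\ref{lem:tech} comparison in place of your stability lemma, your architecture goes through and is arguably simpler than the paper's; as written, however, the proof has a fatal gap.
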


\paragraph{Alternative problem formulation.}
The following version of the question studied here may be more attractive to the escape community~\cite{FW04,KubelL21}. 
Given $n$ pairwise disjoint convex obstacles in a convex polygon of unit diameter (\eg, a square),  what is the maximum length of a (shortest) escape route from any given point in the polygon to its boundary? According to Theorem~\ref{thm:main}, it is always $O(n^{1/2})$ and sometimes $\Omega(n^{1/3})$. 
\old{A discrete version is the escape problem in a $k \times k$ grid with $n$ point obstacles, where a path must follow grid edges. Note that in this version, a solution does not always exist. Both versions naturally extend to higher dimensions. 
} 

\paragraph{Related previous work.}
We are not aware of any previous work on the distortion between geodesic and Euclidean diameters. However, the geodesic distance in polygons with or without holes has been studied extensively from the algorithmic perspective; see~\cite{Mitchell17} for a comprehensive survey. We briefly review the state of the art. In a simple polygon $P$ with $n$ vertices, one can compute the geodesic distance between two given points in $O(n)$  time~\cite{LP84}, trade-offs are also available between time and workspace~\cite{Har15}. A shortest-path data structure can report the geodesic distance between any two query points in $O(\log n)$ time after $O(n)$ preprocessing time~\cite{GH89}. In $O(n)$ time, one can also compute the geodesic diameter~\cite{HS97} and radius~\cite{ABB16}.

For polygons with holes, more involved techniques are needed. Let $P$ be a polygon with $h$ holes, and a total of $n$ vertices. For any $s,t\in P$, one can compute $\geod(s,t)$ in 
$O(n+h\log h)$ time and $O(n)$ space~\cite{Wang23}, improving earlier bounds in~\cite{HS99,KMM97,Mitchell96,Wan21a}. A shortest-path data structure can report the geodesic distance between two query points in $O(\log n)$ query time using $O(n^{11})$ space; or in $O(h \log n)$ query time with $O(n+h^5)$ space~\cite{CM99}. 
The geodesic radius can be computed in $O(n^{11}\log n)$ time~\cite{BaeKO19,Wang18},
and the geodesic diameter in $O(n^{7.73})$ or $O(n^7 (\log n + h))$ time~\cite{BaeKO13}.
One can find a $(1+\eps)$-approximation of both the geodesic diameter and radius in $O((n/\eps^2 + n^2/\eps) \log n)$ time~\cite{BaeKO13,BaeKO19}. The geodesic diameter may be attained by a point pair $s,t\in P$, where both $s$ and $t$ lie in the interior or $P$; in which case it is known~\cite{BaeKO13} that there are at least five different geodesic paths between $s$ and $t$.

The diameter of an $n$-vertex triangulation with Euclidean lengths  can be computed in $\tilde{O}(n^{5/3})$ time~\cite{Cabello19,GawrychowskiKMS21}. For unweighted graphs in general, the diameter problem has been intensely studied in the fine-grained complexity community. For a graph with $n$ vertices and $m$ edges, breadth-first search (BFS) yields a 2-approximation in $O(m)$ time. Under the Strong Exponential Time Hypothesis (SETH), for any integer $k\geq 2$ and $\eps>0$, a $(2-\frac{1}{k}-\eps)$-approximation requires $mn^{1+1/(k-1)-o(1)}$ time~\cite{DalirrooyfardLW21}; see also~\cite{RodittyW13}.

\section{Convex Polygons with Convex Holes}
\label{sec:convex}

In this section, we prove Theorem~\ref{thm:main}. A lower bound construction is presented in Lemma~\ref{lem:lower}, and the upper bound is established in Lemma~\ref{lem:upper} below.

\paragraph{Lower Bound.}
The lower bound is based on the following construction. 
%
\begin{lemma}\label{lem:lower}
For every $h\in \NN$, there exists a polygonal domain $P\in \C(h)$ such that $g(P)\geq \Omega(h^{1/3})$.
\end{lemma}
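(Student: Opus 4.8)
The plan is to exhibit a single explicit member $P\in\C(h)$ --- a ``zig-zag maze'' of thin rectangular slits inside a unit square --- for which one carefully chosen pair of points already witnesses a large geodesic distance. Normalize $P_0=[0,1]^2$, so that $\diam_2(P)=\sqrt2=\Theta(1)$, and set the common hole diameter to $\delta=\Theta(h^{-1/3})$. The holes are grouped into $m=\Theta(h^{2/3})$ horizontal \emph{barriers} stacked at heights $y_1<\cdots<y_m$ with vertical spacing $\Theta(h^{-2/3})$. Each barrier spans the full width $[0,1]$ and is a \emph{two-row brick pattern}: a lower row of length-$\delta$ slits covering the intervals $[2k\delta,(2k+1)\delta]$ and an upper row, separated vertically by a negligible $\eta\ll h^{-2/3}$ and shifted horizontally by $\delta/2$, covering the gaps of the lower row. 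Consecutive barriers are staggered (their gap columns shifted by $\delta/2$), so that the cheapest passage alternates left and right. Each barrier uses $\Theta(h^{1/3})$ slits, hence the total number of holes is $m\cdot\Theta(h^{1/3})=\Theta(h)$; trimming the count to exactly $h$ is routine. All holes are axis-parallel rectangles, hence convex, pairwise disjoint, and of diameter $\delta=\Theta(h^{-1/3})=\Theta(h^{-1/3})\,\diam_2(P)$, matching the regime of Theorem~\ref{thm:Delta}.

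I would then take $s=(x_s,0)$ and $t=(x_t,1)$ on the bottom and top edges and bound $\geod(s,t)$ from below. The main step is a \emph{crossing lemma}. Since each barrier spans the full width and $P_0$ is the square, every $s$--$t$ path must cross all $m$ barriers. I claim that crossing one barrier forces a horizontal excursion of length $\Omega(\delta)$: to pass the lower row the path must reach one of its gaps, but that gap is blocked from above by an upper-row slit, so the path must slide horizontally by at least $\delta/2$ to clear the upper row. The two-row pattern is precisely what defeats the microscopic gaps between consecutive collinear slits, which a single row would leak through almost straight. Summing the forced excursions over the $m$ vertically-disjoint barriers, and using the staggering to prevent cancellation, the path length is at least $\Omega(m\delta)=\Omega(h^{2/3}\cdot h^{-1/3})=\Omega(h^{1/3})$. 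Hence $\diam_g(P)\ge\geod(s,t)=\Omega(h^{1/3})$, and therefore $\varrho(P)=\diam_g(P)/\diam_2(P)=\Omega(h^{1/3})$.

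I expect the main obstacle to be making the crossing lemma and especially its additivity fully rigorous. Informally each barrier costs $\Omega(\delta)$, but a shortest path is free to cross obliquely, to exploit the tiny inter-slit gaps, and to amortize one long diagonal across several barriers; one must rule out that such a diagonal pays $o(\delta)$ per barrier. I plan to handle this with a projection argument on the horizontal coordinate $x(\cdot)$: a path from $(x_s,0)$ to $(x_t,1)$ has length at least the total variation $\mathrm{TV}(x)$, and I would show that the staggered offsets force $\mathrm{TV}(x)$ to increase by $\ge c\delta$ at each barrier, giving $\geod(s,t)\ge\mathrm{TV}(x)\ge c\,m\delta$. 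The geometric heart is the single-barrier estimate: any arc joining the region below a barrier to the region above it, inside the square minus the two slit-rows, has horizontal span at least $\Omega(\delta)$, proved by a short case analysis on which gaps of the two rows the arc uses. Finally, one must verify that the parameters are mutually consistent --- in particular $\eta\ll h^{-2/3}\ll\delta=h^{-1/3}$, so the two rows of a single barrier lie closer together than consecutive barriers --- which is the routine bookkeeping that keeps the holes disjoint while preserving the forced detour.
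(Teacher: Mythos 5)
There is a genuine gap, and it is fatal to the construction as described rather than to a repairable detail. Your crossing lemma rests on the premise that ``each barrier spans the full width $[0,1]$,'' so that every $s$--$t$ path must cross all $m$ barriers through the slit gaps. But by the definition of a polygonal domain, the holes $P_1,\ldots,P_h$ must be pairwise disjoint simple polygons \emph{in the interior} of $P_0$: no slit may touch the boundary of the square. Consequently every barrier leaves a thin obstacle-free corridor along the left and right walls, and a path can travel horizontally to the nearest wall (choosing a height strictly between two slit rows, so the horizontal shot is unobstructed), run straight up the wall corridor past all $m$ barriers, and return to $t$. This gives $\geod(s,t)=O(1)$ for \emph{every} pair $s,t$, so the domain you build has $\diam_g(P)=O(1)$ and witnesses no lower bound at all. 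This is precisely the issue the paper's construction is designed around: instead of wall-to-wall barriers, it uses \emph{closed rings} encircling the source point $s$ --- $k^2$ nested regular $k$-gons $Q_1,\ldots,Q_{k^2}$, each edge replaced by a thin rectangular slit covering the edge except $\eps$-neighborhoods of its endpoints. A closed ring has no ends for a path to slip around, so the only passages are the deliberate gaps near the vertices of each $Q_i$; and since the vertices of $Q_{i+1}$ sit at the midpoints of the edges of $Q_i$, consecutive rings are automatically staggered, forcing travel $\Omega(|e_i|)=\Omega(1/k)$ between consecutive ring crossings, for a total of $k^2\cdot\Omega(1/k)=\Omega(h^{1/3})$.

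Two secondary problems in the barrier design would also need fixing even inside a ring-based version. First, your stated parameters are inconsistent: with both rows made of length-$\delta$ slits separated by length-$\delta$ gaps and a horizontal shift of $\delta/2$, the union of the two rows' $x$-projections leaves an open vertical corridor of width $\delta/2$ in every period, through which a path passes with no excursion; a shift of $\delta$ (exact cover) instead makes the upper slits and lower gaps share endpoints, and a path crossing near such a shared corner pays only $O(\eta)$, so the claimed single-barrier estimate ``horizontal span $\Omega(\delta)$'' is false in both readings. The clean fix is overlapping coverage (upper slits strictly longer than the lower gaps, overhanging by $\Theta(\delta)$ on each side), after which each crossing genuinely forces a $\Omega(\delta)$ horizontal slide inside the barrier's thin strip and additivity over the vertically disjoint strips is immediate --- or alternatively the exact-cover version can be salvaged by your staggering idea, but only in amortized form over \emph{pairs} of consecutive barriers (one shows $d_i+|x_{i+1}-x_i|+d_{i+1}\geq \delta/2$, where $d_i$ is the distance from the $i$-th crossing abscissa to that barrier's corner set), not as a per-barrier $\mathrm{TV}$ increment as you state it. These repairs are routine; the wall-corridor flaw is the one that forces you to restructure the construction into nested closed rings, at which point you have essentially rederived the paper's proof.
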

\begin{proof}
We may assume, without loss of generality, that $h=k^3$ for some integer $k\geq 3$. We construct a polygon $P$ with $h$ holes, where the outer polygon $P_0$ is a regular $k$-gon of unit diameter, hence $\diam_2(P)=\diam_2(P_0)=1$. Let $Q_0,Q_1,\ldots , Q_{k^2}$ be a sequence of $k^2+1$ regular $k$-gons with a common center such that $Q_0=P_0$, and for every $i\in \{1,\ldots ,k^2\}$, $Q_i$ is inscribed in $Q_{i-1}$ such that the vertices of $Q_i$ are the midpoints of the edges of $Q_{i-1}$; see Fig.~\ref{fig:hexagons}. Enumerate the $k^3$ edges of $Q_1,\ldots, Q_{k^2}$ as $e_1,\ldots, e_{k^3}$. For every $j=1,\ldots , k^3$, we construct a hole as follows: Let $P_j$ be an $(|e|-2\eps)\times \frac{\eps}{2}$ rectangle with symmetry axis $e$ that contains $e$ with the exception of the $\eps$-neighborhoods of its endpoints. Then $P_1,\ldots ,P_{k^3}$ are pairwise disjoint. Finally, let $P=P_0\setminus \bigcup _{j=1}^{k^3} P_j$.

	\begin{figure}[htbp]
		\centering
		\includegraphics[width=\textwidth]{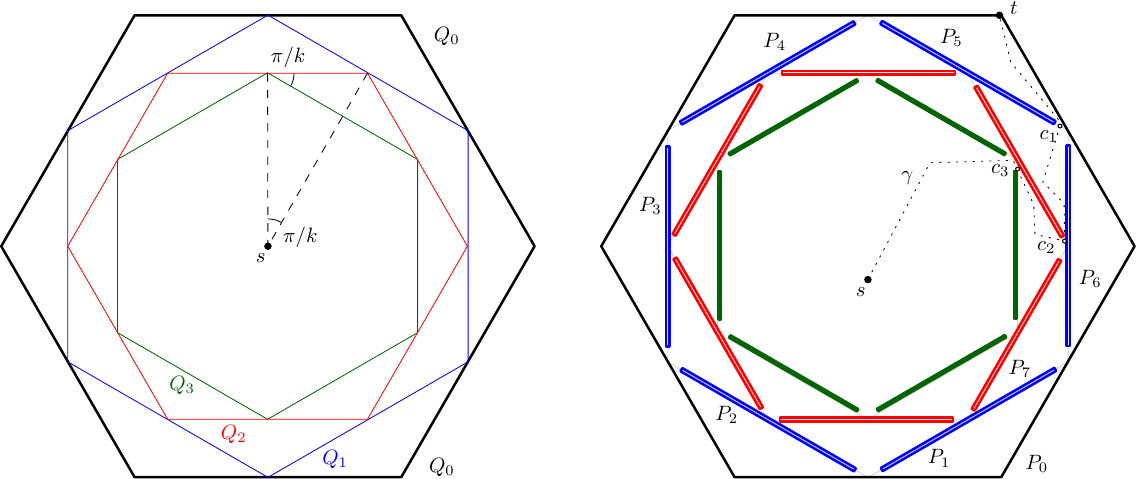}
		\caption{Left: hexagons $Q_0, \ldots, Q_3$ for $k=6$.   
         Right: The 18 holes corresponding to the edges of $Q_1,\ldots , Q_3$.}
		\label{fig:hexagons}
	\end{figure}

Assume, without loss of generality, that $e_i$ is an edge of $Q_i$ for $i\in \{0,1,\ldots , k^2\}$. As $P_0=Q_0$ is a regular $k$-gon of unit diameter, then $|e_0|\geq \Omega(1/k)$. Let us compare the edge lengths in two consecutive $k$-gons. Since $Q_{i+1}$ is inscribed in $Q_i$, we have 
\[|e_{i+1}|=|e_i|\cos \frac{\pi}{k} \geq |e_i| \left(1-\frac{\pi^2}{2k^2}\right) \]
using the Taylor estimate $\cos x\geq 1-x^2/2$.  Consequently, for every $i\in \{0,1,\ldots , k^2\}$, 
$$|e_i|
\geq |e_0|\cdot \left(1-\frac{\pi^2}{2k^2}\right)^{k^2} 
\geq |e_0|\cdot \Omega(1)
\geq \Omega\left(\frac{1}{k}\right).$$

It remains to show that $\diam_g(P)\geq \Omega(k)$. Let $s$ be the center of $P_0$ and $t$ an arbitrary vertex of $P_0$. Consider an $st$-path $\gamma$ in $P$, and for any two points $a,b$ along $\gamma$, let $\gamma(a,b)$ denote the subpath of $\gamma$ between $a$ and $b$. Let $c_i$ be the first point where $\gamma$ crosses the boundary of $Q_i$ for $i\in \{1,\ldots ,k^2\}$. By construction, $c_i$ must be in an $\eps$-neighborhood of a vertex of $Q_i$. Since the vertices of $Q_{i+1}$ are at the midpoints of the edges of $Q_i$, then $|\gamma(c_i,c_{i+1})| \geq \frac12\, |e_i|-2\eps \geq 
\Omega(|e_i|)\geq \Omega(1/k)$. Summation over $i=0,\ldots , k^2-1$ yields $|\gamma|\geq \sum_{i=0}^{k^2-1}|\gamma(c_i,c_{i+1})|\geq k^2\cdot \Omega(1/k)\geq \Omega(k)=\Omega(h^{1/3})$, as required. 
\end{proof}

\paragraph{Upper Bound.} 
Let $P\in \C(h)$ for some $h\in \NN$ and let $s\in P$. For every hole $P_i$, let $\ell_i$ and $r_i$ be points on the boundary of $P_i$ such that $\overrightarrow{s\ell_i}$ and $\overrightarrow{sr_i}$ are tangent to $P_i$, and $P_i$ lies on the left (respectively, right) side of the ray $\overrightarrow{s\ell_i}$ (respectively, $\overrightarrow{sr_i}$).

	\begin{figure}[htbp]
		\centering
		\includegraphics[width=\textwidth]{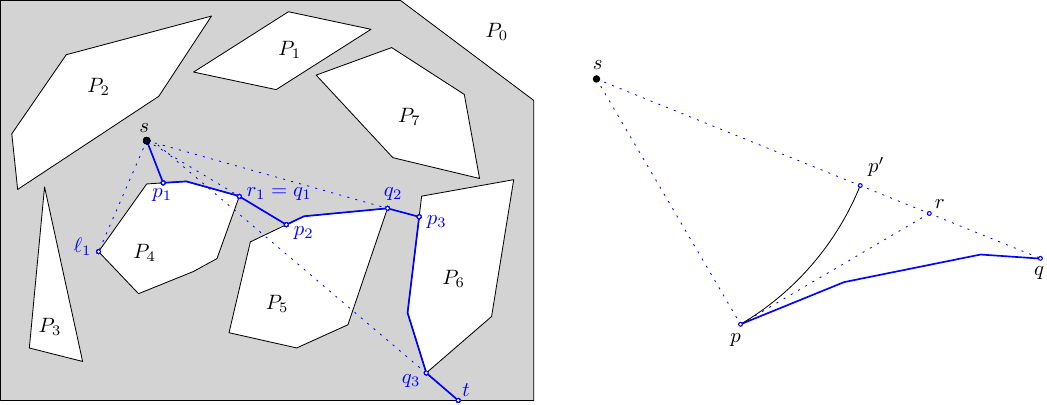}
		\caption{Left: A polygon $P\in \C(7)$ with 7 convex holes, a point $s\in P$,
       and a path $\greedy_P(s,\vec{u})$ from $s$ to a point $t$ on the outer boundary of $P$.
       Right: A boundary arc $\widehat{pq}$, where $|\widehat{pq}|\leq |pr|+|rq|$. }
		\label{fig:example}
	\end{figure}

We construct a path from $s$ to some point on the outer boundary of $P$ by the following recursive algorithm; refer to Fig.~\ref{fig:example}~(left). For a unit vector $\vec{u}\in \mathbb{S}^1$, we 
construct a path $\greedy_P(s,\vec{u})$ as follows. Start from $s$ along a ray emanating from $s$ in direction $\vec{u}$ until it reaches the boundary of $P$ at some point $p$. While $p\notin \partial P_0$ do: Assume that $p\in \partial P_i$ for some $1\leq i\leq h$. Extend the path along $\partial P_i$ to the point $\ell_i$ or $r_i$ so that the distance from $s$ monotonically increases along the path; and then continue along the ray $\overrightarrow{s\ell_i}$ or $\overrightarrow{sr_i}$ until it reaches the boundary of $P$ again. When $p\in \partial P_0$, the path $\greedy_P(s,\vec{u})$ terminates at $p$.

\begin{lemma}\label{lem:upper}
For every $P\in \C(h)$, every $s\in P$ and every $\vec{u}\in \mathbb{S}^1$, we have $|\greedy_P(s,\vec{u})|\leq O(h^{1/2})\cdot \diam_2(P)$, and this bound is the best possible.
\end{lemma}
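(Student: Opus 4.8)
The plan is to exploit the defining feature of the construction: the Euclidean distance $r(x)=|sx|$ is non-decreasing as $x$ traverses $\greedy_P(s,\vec{u})$ from $s$ to its endpoint. First I would establish this monotonicity. The initial ray and every tangent-ray segment point away from $s$, so $r$ increases along them; and on each boundary arc the path moves from its entry point toward a tangent point of $P_i$ along the near side visible from $s$, which is exactly the direction of increasing $r$ (the near arc attains its minimum distance at the point of $P_i$ closest to $s$ and increases monotonically toward either tangent point $\ell_i,r_i$, so the greedy distance-increasing rule keeps us on a $\theta$-monotone, $r$-monotone subarc). Hence $r$ runs monotonically from $0$ to at most $\diam_2(P)$ over the whole path.

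Next I would split the length into the straight (radial) segments and the boundary arcs, and write each length element as $\sqrt{dr^2+r^2\,d\theta^2}\le |dr|+r\,|d\theta|$. Since $r$ is globally monotone and the angular coordinate $\theta$ (seen from $s$) is monotone on each traversed near arc, this gives $|\greedy_P(s,\vec{u})|\le \int dr+\sum_i\int_{\mathrm{arc}_i} r\,d\theta\le \diam_2(P)+\sum_i \rho_i\,\Delta\theta_i$, where $\rho_i$ is the tangent length $|s\ell_i|$ (or $|sr_i|$) and $\Delta\theta_i\le \alpha_i<\pi$ is the angle swept while wrapping $P_i$ ($\alpha_i$ being the visual angle of $P_i$ from $s$); here I use $r\le\rho_i$ on the traversed subarc. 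The radial term $\int dr$ already absorbs all straight segments together with the radial part of every arc and is at most $\diam_2(P)$, so everything reduces to bounding the \emph{total tangential length} $\sum_i \rho_i\,\Delta\theta_i=O(h^{1/2})\,\diam_2(P)$.

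This tangential sum is the heart of the matter and the main obstacle. Each term $b_i:=\rho_i\Delta\theta_i$ is comparable to the width of $P_i$ as seen from $s$, so $b_i\le \diam_2(P_i)$; at most $h$ holes are wrapped; and, since the holes are pairwise disjoint convex bodies inside a region of diameter $\diam_2(P)$, their areas sum to $O(\diam_2(P)^2)$. The trouble is that neither the count nor the area alone suffices: a plain Cauchy--Schwarz between the $h$ terms and $\sum_i b_i^2$ fails because $\sum_i b_i^2$ can itself be polynomially large in $h$. I would therefore group the wrapped holes by dyadic distance scale, $\rho_i\in(2^{-k-1},2^{-k}]\cdot\diam_2(P)$, and within each scale combine convexity (each $\alpha_i<\pi$, so a single hole occupies only a bounded angular share at its scale) with the disjointness/area budget of the corresponding annulus to control both the number of wrapped holes and their total tangential width at that scale; summing the per-scale estimates by a scale-localized Cauchy--Schwarz should yield $\sum_i b_i=O(h^{1/2})\,\diam_2(P)$. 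The delicate point is making the two budgets — angular occupancy and area — interact so that the extremal trade-off balances to exactly the exponent $1/2$ rather than leaking a larger power.

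For the matching ``best possible'' claim I would give a configuration in the spirit of Lemma~\ref{lem:lower} but tuned to force many long detours. Place radially thin convex holes on $R\approx h^{3/4}$ nested rings, with $q\approx h^{1/4}$ holes per ring (so $Rq=h$), each hole subtending visual angle $\approx 2\pi/q$ at the center $s$, the small angular gaps staggered in a spiral so that each tangent ray runs directly into a hole of the next ring; packing the $R$ thin rings into an annulus of radii in $[\tfrac12,1]$ keeps every $\rho\approx \diam_2(P)$. Then $\greedy_P(s,\vec{u})$ wraps one hole per ring, each contributing arc length $\approx \rho\cdot(2\pi/q)$, for a total $\approx (R/q)\cdot\Theta(1)=\Theta(h^{1/2})\,\diam_2(P)$, certifying that the $O(h^{1/2})$ bound for the greedy path cannot be improved.
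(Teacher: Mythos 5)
Your monotonicity setup and the split into radial and tangential contributions are correct, but the reduction to bounding the total tangential length $\sum_i \rho_i\,\Delta\theta_i$ discards the one piece of information that makes the lemma true, and the substitute budgets you propose (hole count, total area, per-scale angular occupancy) are provably insufficient, not merely delicate. Thin holes --- near-degenerate rectangles, as in the paper's own Lemma~\ref{lem:lower} construction --- have negligible area, so the area budget gives nothing; and an $r$-monotone path may spiral around $s$, so within a single dyadic annulus the wrapped arcs can sweep total angle far exceeding $2\pi$: indeed, the Lemma~\ref{lem:lower} construction packs $k^2=h^{2/3}$ nested rings of full angular extent $2\pi$ into $O(1)$ dyadic scales. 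What actually controls the tangential length is the radial monotonicity \emph{on each individual arc}: the paper shows that if a boundary arc $\widehat{pq}$ (a convex piece of a hole boundary, traversed with $|sx|$ increasing) raises the distance from $s$ by $\Delta_{pq}=|sq|-|sp|$, then by convexity and the Pythagorean theorem $|\widehat{pq}|\leq O\bigl(\sqrt{\diam_2(P)\cdot\Delta_{pq}}+\Delta_{pq}\bigr)$; equivalently, a wrap of tangential length $b$ at distance $\rho$ forces a radial gain of $\Omega(b^2/\rho)$. Since $\sum_{\widehat{pq}}\Delta_{pq}\leq\diam_2(P)$ and there are at most $2h$ arcs (at most two per hole, each ending at a tangent point $\ell_i$ or $r_i$), a single application of Jensen/Cauchy--Schwarz yields $\sum|\widehat{pq}|\leq O(\sqrt{h})\cdot\diam_2(P)$, with no multiscale decomposition needed. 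This per-arc estimate is precisely the step you left as ``should yield,'' and without it your interacting-budgets plan cannot close.

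The same estimate shows that your lower-bound construction is geometrically infeasible. With $R\approx h^{3/4}$ wraps, each of tangential length $b\approx\rho\cdot 2\pi/q\approx h^{-1/4}$ at $\rho=\Theta(1)$, every wrap forces the distance from $s$ to increase by $\Omega(b^2/\rho)=\Omega(h^{-1/2})$, so after all $R$ rings the distance would have grown by $\Omega(h^{1/4})\cdot\diam_2(P)$ --- but the monotone distance can never exceed $\diam_2(P)$, so the greedy path cannot wrap one hole per ring as you claim. In fact the three budgets $nb=\Theta(h^{1/2})$, $nb^2/\rho=O(1)$, and $n\leq 2h$ force any tight example to have $n=\Theta(h)$ wraps of length $b=\Theta(h^{-1/2})\cdot\diam_2(P)$ each, i.e., essentially every hole must be wrapped once. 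This is exactly what the paper does: it perturbs the Lemma~\ref{lem:lower} construction (rotating each hole slightly so that the greedy path always turns left, adding a small triangle around $s$, and retaining only the $3k^2+3$ holes that $\greedy_P(s,\vec{u})$ can touch for some $\vec{u}$), obtaining $|\greedy_P(s,\vec{u})|\geq\Omega(k)=\Omega(h^{1/2})$ for every direction $\vec{u}\in\mathbb{S}^1$. Your parameters $R\approx h^{3/4}$, $q\approx h^{1/4}$ would need to be replaced by this balanced regime for the construction to survive the upper-bound analysis.
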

\begin{proof}
Let $P$ be a polygonal domain with a convex outer polygon $P_0$ and $h$ convex holes.
We may assume, without loss of generality, that $\diam_2(P)=1$. For a point $s\in P$ and a unit vector $\vec{u}$, consider the path $\greedy_P(s,\vec{u})$. By construction, the distance from $s$ monotonically increases along this path, and so the path has no self-intersections. It is composed of \emph{radial segments} that lie along rays emanating from $s$, and \emph{boundary arcs} that lie on the boundaries of holes. By monotonicity, the total length of all radial segments is at most $\diam_2(P)$. Since every boundary arc ends at a point of tangency $\ell_i$ or $r_i$, for some $i\in \{1,\ldots, h\}$, the path $\greedy_P(s,\vec{u})$ contains at most two boundary arcs along each hole (the bound of $2$ can be attained; see Fig.~\ref{fig:twice}(left) for an example), Therefore, the number of boundary arcs is at most $2h$.
. 
Let $\A$ denote the set of all boundary arcs along $\greedy_P(s,\vec{u})$; then $|\A|\leq 2h$.

Along each boundary arc $\widehat{pq}\in \A$, from $p$ to $q$, the distance from $s$ increases by $\Delta_{pq}=|sq|-|sp|$. By monotonicity, we have $\sum_{\widehat{pq}\in \A} \Delta_{pq}\leq \diam_2(P)$.
We now give an upper bound for the length of $\widehat{pq}$. Let $p'$ be a point in $sq$ such that $|sp|=|sp'|$, and let $r$ be the intersection of $sq$ with a line orthogonal to $sp$ passing through $p$; see Fig.~\ref{fig:example}~(right).
Note that $|sp|<|sr|$. Since the distance from $s$ monotonically increases along the arc $\widehat{pq}$, then $q$ is in the closed halfplane bounded by $pr$ that does not contain $s$. Combined with $|sp|<|sr|$, this implies that $r$ lies between $p'$ and $q$ on the line $sq$, 
consequently $|p'r| <|p'q|=\Delta_{pq}$ and $|rq| <|p'q|=\Delta_{pq}$.
By the triangle inequality and the Pythagorean theorem, these estimates give an upper bound
\begin{align*}
|\widehat{pq}|
&\leq |pr|+|rq|
= \sqrt{|sr|^2-|sp|^2} + |rq|
=  \sqrt{(|sp'|+|p'r|)^2-|sp|^2 } +|rq| \\    
&\leq  \sqrt{(|sp|+\Delta_{pq})^2-|sp|^2 } +\Delta_{pq}
\leq  O\left(\sqrt{|sp| \Delta_{pq}} +\Delta_{pq} \right) \\
&\leq  O\left(\sqrt{\diam_2(P)\cdot \Delta_{pq}} +\Delta_{pq} \right).
\end{align*}
Summation over all boundary arcs, using Jensen's inequality, yields
\begin{align*}
\sum_{\widehat{pq}\in \A} |\widehat{pq}| 
&\leq \sum_{\widehat{pq}\in \A}  O\left(\sqrt{\diam_2(P)\cdot \Delta_{pq}}+  \Delta_{pq}\right)\\
&\leq \sqrt{\diam_2(P)} \cdot O\left(\sum_{\widehat{pq}\in \A} \sqrt{\Delta_{pq}}\right)
    + O\left(\sum_{\widehat{pq}\in \A}  \Delta_{pq}\right)\\
&\leq \sqrt{\diam_2(P)}\cdot O\left( |\A| \cdot 
     \sqrt{\frac{1}{|\A|}\sum_{\widehat{pq}\in \A} \Delta_{pq}}\right)  + O(\diam_2(P))\\
&\leq \sqrt{\diam_2(P)}\cdot O\left(\sqrt{|\A|\cdot \diam_2(P)}  
\right)  + O(\diam_2(P))\\
&\leq  O\left(\sqrt{|\A|}\right)\cdot \diam_2(P) 
\leq  O\left(\sqrt{h}\right)\cdot \diam_2(P) ,
\end{align*}
as claimed.

	\begin{figure}[htbp]
		\centering
		\includegraphics[width=\textwidth]{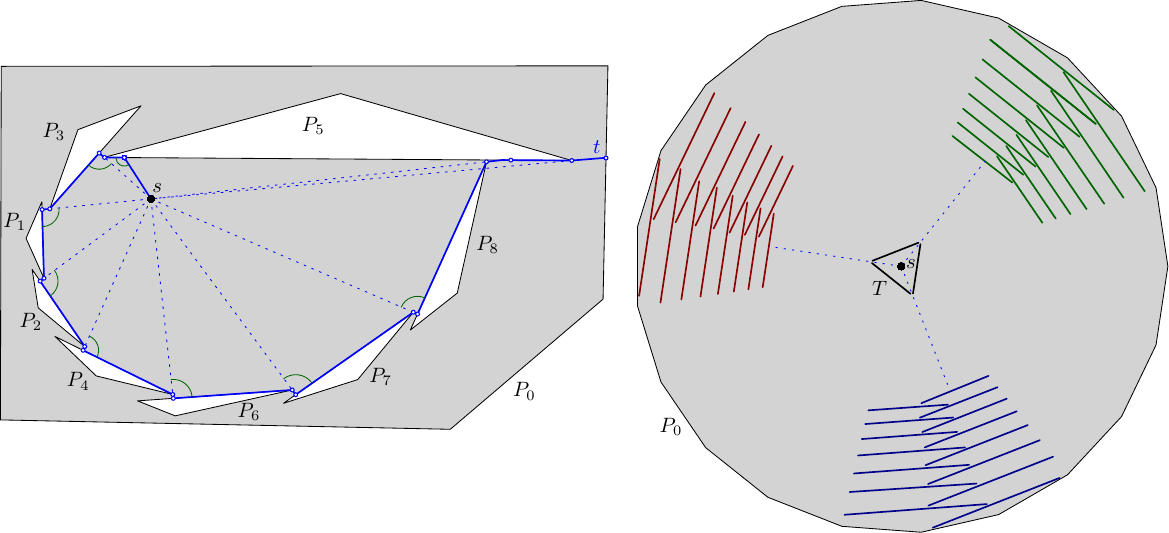}
		\caption{Left: An example, where $\greedy_P(s,\vec{u})$ visits $P_5$ twice.
        Right: An example, where the bound $|\greedy_P(s,\vec{u})|\leq O(h^{1/2})\cdot \diam_2(P)$ is attained for all $\vec{u}\in \mathbb{S}^1$.}
		\label{fig:twice}
	\end{figure}

We now show that the bound  $|\greedy_P(s,\vec{u})|\leq O(h^{1/2})\cdot \diam_2(P)$ is the best possible. For every $h\in \NN$, we construct a polygon $P\in \C(h)$ and a point $s$ such that for every $\vec{u}\in \mathbb{S}^1$, we have $|\greedy_P(s,\vec{u})|\geq \Omega(h^{1/2})$. Without loss of generality, we may assume that 
$\diam_2(P)=1$, $h=3(k^2+1)$, and $k$ is a multiple of 3. 

We start with the construction described in the proof of Lemma~\ref{lem:lower} with $k^3$ rectangular holes in a regular $k$-gon $P_0$, where $s$ is the center of $P_0$. We modify the construction in three steps; see Fig.~\ref{fig:twice}(right): 
(1) Let $T=(a,b,c)$ be a small equilateral triangle centered at $s$, such that the rays $\overrightarrow{sa}$, $\overrightarrow{sb}$, and $\overrightarrow{sc}$ intersect the midpoints of three edges of the  innermost $k$-gon (this is possible because $k$ is a multiple of 3).
Then construct three rectangular holes around the edges of $T$; to obtain a total of $k^3+3$ holes. (2) Rotate each rectangular hole $P_j$ around its own center by a sufficiently small angle as follows: The holes corresponding to the $i$-th $k$-gons from $s$ are rotated counterclockwise if $i$ is odd, and clockwise if $i$ is even. 
As a result, when the path $\greedy_P(s,\vec{u}$ successively reaches holes in an $\eps$-neighborhood of their center, it alternately turns right or left. (3) For every $\vec{u}\in \mathbb{S}^1$, the path $\greedy_P(s,\vec{u})$ exits the triangle $T$ at a small neighborhood of a corner of $T$. From each corner of $T$, $\greedy_P(s,\vec{u})$ continues to the outer boundary along a sequence of $k^2$ holes. 
We delete all holes that $\greedy_P(s,\vec{u})$ does not touch for any $\vec{u}\in \mathbb{S}^1$, thus we retain $h=3k^2+3$ holes. For every $\vec{u}\in \mathbb{S}^1$, we have $|\greedy_P(s,\vec{u})|\geq \Omega(k)$ according to the analysis in Lemma~\ref{lem:lower},
hence $|\greedy_P(s,\vec{u})|\geq \Omega(h^{1/2})$, as required. 
\end{proof}

\begin{corollary}
For every $h\in \NN$ and every polygon $P\in \C(h)$, we have 
$\diam_g(P)\leq O(h^{1/2})\cdot \diam_2(P)$.
\end{corollary}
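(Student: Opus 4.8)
The plan is to derive the corollary directly from Lemma~\ref{lem:upper} by stitching two one-sided greedy paths into a genuine $s$--$t$ path for an arbitrary pair of points. Since $\diam_g(P)=\sup_{s,t\in P}\geod(s,t)$, it suffices to show that $\geod(s,t)\leq O(h^{1/2})\cdot\diam_2(P)$ for every fixed pair $s,t\in P$, and then take the supremum.

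First I would fix an arbitrary direction $\vec{u}\in\mathbb{S}^1$ and run the greedy algorithm from each endpoint: let $\greedy_P(s,\vec{u})$ terminate at a point $p_s\in\partial P_0$ and $\greedy_P(t,\vec{u})$ terminate at a point $p_t\in\partial P_0$. By Lemma~\ref{lem:upper}, each of these paths has length at most $O(h^{1/2})\cdot\diam_2(P)$, and by construction both are contained in $P$.

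Next I would connect $p_s$ and $p_t$ along the outer boundary $\partial P_0$. Because the holes $P_1,\ldots,P_h$ lie in the interior of $P_0$, the entire boundary $\partial P_0$ is contained in $P$, so either boundary arc between $p_s$ and $p_t$ is a valid path in $P$. As $P_0$ is convex, its perimeter is at most $\pi\cdot\diam_2(P_0)\leq \pi\cdot\diam_2(P)$, whence the shorter arc has length $O(\diam_2(P))$.

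Concatenating $\greedy_P(s,\vec{u})$, this boundary arc, and the reverse of $\greedy_P(t,\vec{u})$ produces an $s$--$t$ path in $P$ of total length
\[
O(h^{1/2})\cdot\diam_2(P) + O(\diam_2(P)) + O(h^{1/2})\cdot\diam_2(P) = O(h^{1/2})\cdot\diam_2(P).
\]
Thus $\geod(s,t)\leq O(h^{1/2})\cdot\diam_2(P)$, and taking the supremum over all $s,t\in P$ gives the stated bound on $\diam_g(P)$. I do not expect any genuinely hard step; the only point needing a moment's care is checking that the connecting arc stays inside $P$ and contributes only the lower-order term $O(\diam_2(P))$, which is immediate from the convexity of $P_0$ together with the standard perimeter-versus-diameter inequality.
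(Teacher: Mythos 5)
Your proposal is correct and matches the paper's own proof essentially step for step: two applications of Lemma~\ref{lem:upper} to route each point to $\partial P_0$, a connecting arc along the outer boundary bounded via $|\partial P_0|\leq \pi\cdot\diam_2(P_0)$, and concatenation. No substantive difference.
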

\begin{proof}
Let $P\in \C(h)$ and $s_1,s_2\in P$. By Lemma~\ref{lem:upper}, there exist points $t_1,t_2\in \partial P_0$ such that $\geod(s_1,t_1)\leq O(h^{1/2})\cdot \diam_2(P)$ and $\geod(s_2,t_2)\leq O(h^{1/2})\cdot \diam_2(P)$. There is a path between $t_1$ and $t_2$ along the perimeter of $P_0$. It is well known~\cite{SA00,ConvexFigures} that $|\partial P_0|\leq \pi\cdot \diam_2(P_0)$ for every convex body $P_0$, hence $\geod(t_1,t_2)\leq O(\diam_2(P))$. The concatenation of these three paths yields a path in $P$ connecting $s_1$ and $s_2$, of length $\geod(s_1,s_2)\leq O(h^{1/2})\cdot \diam_2(P)$.
\end{proof}

\section{Improved Upper Bound for Holes of Bounded Diameter}  \label{sec:improved}

In this section, we prove Theorem~\ref{thm:Delta}. Similarly to the proof of Theorem~\ref{thm:main}, it is enough to bound the geodesic distance from an arbitrary point in $P$ to the outer boundary. We give three such bounds in Lemmas~\ref{lem:unit}, \ref{lem:Delta} and~\ref{lem:Delta2} that address different ranges of $h$: Lemmas~\ref{lem:unit} covers $\Delta<O(h^{-1})$ and Lemma~\ref{lem:Delta2} the range $\Delta\geq h^{-1}$; while Lemma~\ref{lem:Delta} holds for all values of $\Delta$. 

\begin{lemma}\label{lem:unit}
Let $P\in \C(h)$ such that $\diam_2(P_i) \leq \Delta\cdot \diam_2(P)$ for every hole $P_i$. If $\Delta\leq O(h^{-1})$, then from every point $s\in P$, there exists a path of length $O(\diam_2(P))$ in $P$ to the outer boundary $\partial P_0$. 
\end{lemma}
\begin{proof}
Let $s\in P$ and $t\in \partial P_0$. Construct an $st$-path $\gamma$ as follows: Start with the straight line segment $st$, and whenever $st$ intersects the interior of a hole $P_i$, the segment $st\cap P_i$ is replaced by a shortest path along $\partial P_i$ between the endpoints of $st\cap P_i$. Since $|\partial P_i|\leq \pi\cdot \diam_2(P_i)$ for every convex hole $P_i$~\cite{SA00,ConvexFigures}, then $|\gamma|\leq |st|+\sum_{i=1}^h |\partial P_i| \leq \diam_2(P)+ \sum_{i=1}^h O(\diam_2(P_i)) \leq \O(1+h\Delta)\cdot \diam_2(P) \leq O(\diam_2(P))$, as claimed.
\end{proof}

\begin{lemma}\label{lem:Delta}
Let $P\in \C(h)$ such that $\diam_2(P_i) \leq \Delta\cdot \diam_2(P)$ for every hole $P_i$. Then from every point $s\in P$, there exists a path of length $O(1+h^{3/4} \Delta) \cdot \diam_2(P)$ in $P$  to the outer boundary $\partial P_0$. 
\end{lemma}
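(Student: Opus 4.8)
The plan is to avoid the greedy path of Lemma~\ref{lem:upper} (which can itself be as long as $\Theta(h^{1/2})\cdot\diam_2(P)$) and instead build a path in two phases separated by a scale $\rho$ that I fix at the end. Normalize $\diam_2(P)=1$, so every hole has diameter at most $\Delta$. If $\Delta\ge h^{-1/4}$ then $h^{3/4}\Delta\ge h^{1/2}$ and Lemma~\ref{lem:upper} already yields a path of length $O(h^{1/2})\le O(h^{3/4}\Delta)$; so I may assume $\Delta< h^{-1/4}$ and set $\rho=h^{1/4}\Delta\in(\Delta,1)$. I would route from $s$ first to the circle $\partial B(s,\rho)$ (the \emph{near} phase), and then from there to $\partial P_0$ along an essentially straight ray that detours around the holes it meets (the \emph{far} phase). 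Call a hole \emph{near} if its distance from $s$ is less than $\rho$, and \emph{far} otherwise.

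For the near phase I would reuse the estimate from the proof of Lemma~\ref{lem:upper}, truncated to the ball $B(s,\rho)$. Running $\greedy_P(s,\vec u)$ for an arbitrary $\vec u$ and stopping at the first point $w$ with $|sw|=\rho$ yields a monotone subpath contained in $B(s,\rho)$; along it every boundary arc $\widehat{pq}$ satisfies $|sp|\le\rho$ and $\sum_{\widehat{pq}}\Delta_{pq}\le\rho$, while the number of arcs is still at most $2h$. The computation in Lemma~\ref{lem:upper}, with $\diam_2(P)$ replaced by $\rho$ in each occurrence, then gives a near-phase length of $O(\sqrt h)\cdot\rho$.

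For the far phase I would use an averaging argument over the direction of the outgoing ray. A far hole $P_i$ lies at distance $d_i\ge\rho\ge\Delta$ from $s$ and has diameter $\le\Delta$, so its angular width as seen from $s$ is $O(\Delta/d_i)$. Hence a ray from $s$ in a uniformly random direction meets $P_i$ with probability $O(\Delta/d_i)$, and detouring around $P_i$ costs at most $\pi\Delta$; therefore
\[
\EE[\text{far detour length}]\le\sum_{d_i\ge\rho}O\!\left(\frac{\Delta}{d_i}\right)\cdot\pi\Delta
=O(\Delta^2)\sum_{d_i\ge\rho}\frac1{d_i}\le O(\Delta^2)\cdot\frac{h}{\rho}=O\!\left(\frac{h\Delta^2}{\rho}\right),
\]
using $\sum_{d_i\ge\rho}1/d_i\le h/\rho$. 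Fixing a direction attaining at most the expectation, the far phase has detour length $O(h\Delta^2/\rho)$. Adding the near-phase length, the length of the straight segment (at most $\diam_2(P)=1$), and this far detour, the total is $O(1+\sqrt h\,\rho+h\Delta^2/\rho)$; the choice $\rho=h^{1/4}\Delta$ balances the last two terms at $h^{3/4}\Delta$ and proves the claim.

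The step I expect to be most delicate is gluing the two phases: the far-phase bound requires that every hole met by the straight part lie at distance $\Omega(\rho)$ from the apex of the averaged ray, whereas the near phase ends at the greedy exit point $w\in\partial B(s,\rho)$ rather than at $s$. I would reconcile this either by taking the far ray to emanate from $s$ in the (random) direction of $w$ and continuing it past $w$, so that far holes keep distance $\Omega(\rho)$, or by first sliding along $\partial B(s,\rho)$ by an arc of length $O(\rho)$ (absorbed into the near-phase budget) to reach a point from which a good outgoing direction is available; in both cases one must check that the holes newly encountered are far and thus governed by the $O(\Delta/d_i)$ angular-width bound. A secondary point to verify is the angular-width estimate itself, namely that a convex set of diameter $\le\Delta$ at distance $d\ge\Delta$ subtends an angle $O(\Delta/d)$ at $s$, which follows from $\tan(\theta/2)\le(\Delta/2)/d$.
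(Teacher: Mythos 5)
Your quantitative skeleton is right, and in fact numerically identical to the paper's: the paper also builds a two-phase path, bounding a truncated greedy part by $O(h^{1/2}\ell\Delta)$ and an escape part by $O(h\Delta/\ell)$, then balances with $\ell=h^{1/4}$ --- which is your bound $O(\sqrt{h}\,\rho + h\Delta^2/\rho)$ under the substitution $\rho=\ell\Delta$. Your near-phase estimate (greedy truncated at radius $\rho$, with $|sp|\le\rho$ and $\sum\Delta_{pq}\le\rho$ in the computation of Lemma~\ref{lem:upper}) is correct, as is the angular-width bound for far holes. But the step you flag as delicate is a genuine gap, and it is precisely the point the paper's proof is engineered to avoid. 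The paper does not truncate at a circle: by pigeonhole among $\ell$ candidate lines on each of four sides, it selects axis-parallel lines $H_{-a},H_b,V_{-c},V_d$, \emph{each crossing at most $h/\ell$ holes}, and truncates the greedy path at the boundary of the rectangle $B$ they bound. The exit point of greedy from $B$ thus lands automatically on a pre-certified sparse escape line, so no transfer from the near-phase endpoint to a ``good direction'' is ever needed. Your fix (1) --- shooting the far ray from $s$ in ``the direction of $w$'' --- is unsound: the exit direction of $\greedy_P(s,\vec u)$ is a deterministic function of $\vec u$ whose distribution need not be uniform. Indeed, the funneling construction inside the proof of Lemma~\ref{lem:upper} forces $\greedy_P(s,\vec u)$, for \emph{every} $\vec u\in\mathbb{S}^1$, to exit near one of three fixed corners; an adversary can then line up $\Theta(h)$ far holes along those three corridors, making the far detour $\Theta(h\Delta)\gg h^{3/4}\Delta$. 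Your averaging is valid only for a ray direction chosen uniformly and independently of the configuration greedy encounters, and the greedy exit point gives you no such independence.

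Fix (2) is repairable but incomplete as stated: a fixed circle $\partial B(s,\rho)$, and hence any arc of length $O(\rho)$ along it, can be crossed by arbitrarily many thin, nearly radial holes, so the slide's detour cost is unbounded without a further averaging you do not perform. The missing ingredient is to randomize the truncation radius as well: each hole of diameter at most $\Delta$ meets circles centered at $s$ of radii forming an interval of length at most $\Delta$, so a uniform $r\in[\rho,2\rho]$ crosses an expected $O(h\Delta/\rho)$ holes, costing $O(h\Delta^2/\rho)$ in detours along the circle. Choosing the direction $\vec v$ uniformly, anchoring the far ray at $s$ (so that your $O(\Delta/d_i)$ bound applies, with ``far'' relaxed to $d_i\ge r-\Delta$), truncating greedy at radius $r$, and sliding along that circle to $s+r\vec v$, linearity of expectation yields a pair $(r,\vec v)$ for which all three contributions are $O(h^{3/4}\Delta)$ simultaneously. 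With that addition your argument closes and becomes a legitimate circle-and-ray analogue of the paper's line-based pigeonhole; as submitted, however, the gluing step is unresolved and one of your two proposed patches fails outright.
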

\begin{proof}
Assume without loss of generality that $\diam_2(P)=1$ and $s$ is the origin. Let $\ell\in \NN$ be a parameter to be specified later. For $i\in \{-\ell, -\ell+1,\ldots , \ell\}$, 
let $H_i: y=i\cdot \Delta$ be a horizontal line, and $V_i: x=i\cdot \Delta$ be a vertical line. Since any two consecutive horizontal (respectively, vertical) lines are distance $\Delta$ apart, and the diameter of each hole is at most $\Delta$, then the interior of each hole intersects at most one horizontal and at most one vertical line. By the pigeonhole principle, there are integers $a,b,c,d\in \{1,\ldots , \ell\}$ such that $H_{-a}$, $H_{b}$, $V_{-c}$, and $V_{d}$ each intersect the interior of at most $h/\ell$ holes; see Fig.~\ref{fig:shorts}. 

\begin{figure}[htbp]
		\centering 
\includegraphics[width=\textwidth]{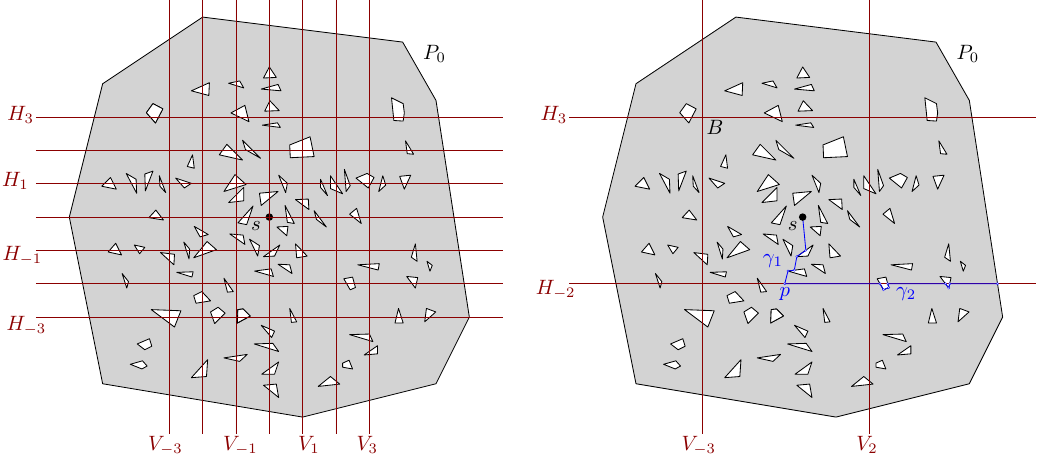}
		\caption{Illustration for $\ell=3$. Left: All lines $H_i$ and $V_i$. Right: Box $B$ and a path $\gamma$ from $s$ to the outer boundary.}
		\label{fig:shorts}
	\end{figure}

Let $B$ be the axis-aligned rectangle bounded by the lines $H_{-a}$, $H_{b}$, $V_{-c}$, and $V_{d}$. Due to the spacing of the lines, we have $\diam_2(B)\leq 2\cdot \sqrt{2}\cdot\ell\Delta=O(\ell\Delta)$.

We can construct a path from $s$ to $\partial P_0$ as a concatenation of two paths $\gamma=\gamma_1 \oplus \gamma_2$. Let $\gamma_1$ be the initial part of $\greedy_P(s,\vec{u})$ from $s$ for an arbitrary direction $\vec{u}$ until it reaches the boundary of $B\cap P_0$ at some point $p$. If $p\in \partial P_0$, then $\gamma_2=(p)$ is a trivial one-point path. Otherwise, $p$ lies on a line $L\in \{H_{-a},H_b,V_{-c},V_d\}$ that intersects the interior of at most $h/\ell$ holes.
Let $\gamma_2$ follow $L$ from $p$ to the boundary of $P_0$ so that when it encounters a hole $P_i$, it makes a detour along $\partial P_i$, where each detour is the shortest path between two points along $\partial P_i$. 

It remains to analyze the length of $\gamma$. By Lemma~\ref{lem:upper}, we have $|\gamma_1|\leq O(\sqrt{h})\cdot \diam_2(B)\leq O(h^{1/2}\ell\Delta)$.
The path $\gamma_2$ has edges along the line $L$ and along the boundaries of holes whose interior intersect $L$. The total length of all edges along $L$ is at most $\diam_2(P)=1$.
It is well known that $\per(C)\leq \pi\cdot \diam_2(C)$ for every convex body~\cite{SA00,ConvexFigures}, and so the length of each detour is $O(\diam_2(P_i))\leq O(\Delta)$, and the total length of $O(h/\ell)$ detours is $O(h\Delta/\ell)$. Consequently,
\begin{equation}\label{eq:balance}
|\gamma|\leq O(h^{1/2}\ell\Delta+h\Delta/\ell+1).
\end{equation}
Finally, we set $\ell=\lceil h^{1/4}\rceil$ to balance the first two terms in \eqref{eq:balance},  and obtain $|\gamma|\leq O(h^{3/4}\Delta+1)$, as claimed.
\end{proof}

When all holes are line segments, we construct a monotone path from $s$ to the outer boundary. A polygonal path $\gamma=(p_0,p_1,\ldots , p_m)$ is \emph{$\vec{u}$-monotone} for a unit vector $\vec{u}\in \mathbb{S}^1$ if $\vec{u}\cdot \overrightarrow{v_{i-1} v_i} \geq 0$ for all $i\in \{1,\ldots, m\}$; and $\gamma$ is \emph{monotone} if it is $\vec{u}$-monotone for some $\vec{u}\in \mathbb{S}^1$.

\begin{lemma}\label{lem:segment}
Let $P\in \C(h)$ such that every hole is a line segment of length at most $\Delta\cdot \diam_2(P)$. If $\Delta\geq h^{-1}$, then from every point $s\in P$, there exists a monotone path of length $O(h^{1/2}\Delta^{1/2}) \cdot \diam_2(P)$ in $P$ to the outer boundary $\partial P_0$.
\end{lemma}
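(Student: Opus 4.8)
I would normalize $\diam_2(P)=1$ and place $s$ at the origin. As in Theorem~\ref{thm:main}, it suffices to produce a single monotone path from $s$ to $\partial P_0$; the diameter bound then follows by concatenating two such paths with a boundary arc, using $|\partial P_0|\le\pi\cdot\diam_2(P_0)$. The first reduction is to express the length of any $\vec{u}$-monotone path as the sum of its \emph{forward} progress (its extent in the direction $\vec{u}$, which is at most $\diam_2(P)=1$) and its \emph{lateral} movement (the total variation of the $\vec{u}^\perp$-coordinate). Concretely, I would build the path greedily: travel straight in direction $\vec{u}$, and whenever a segment hole $\sigma_j$ blocks the ray, detour along $\sigma_j$ to its endpoint of larger $\vec{u}$-coordinate (its ``top'' endpoint) and resume. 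This keeps the path $\vec{u}$-monotone by construction, never revisits a rounded segment, and each detour contributes lateral movement at most the lateral extent $\ell_j|\sin\theta_j|\le\ell_j\le\Delta\cdot\diam_2(P)$, where $\theta_j$ is the angle between $\sigma_j$ and $\vec{u}$. Hence the length is at most $\diam_2(P)+N\cdot\Delta\cdot\diam_2(P)$, where $N$ is the number of segments the path rounds.

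Everything then rests on choosing $\vec{u}$ so that $N=O(\sqrt{h/\Delta})$, which yields detours of total length $O(\sqrt{h\Delta})\cdot\diam_2(P)$, as desired. The plan here mirrors the balancing in Lemma~\ref{lem:Delta}, but tuned to the fact that holes are now one-dimensional: I would fix $\vec{u}$ and a pencil of $\ell$ lines parallel to $\vec{u}$ (equivalently, a family of exit ``tracks'') spaced so that each segment, having diameter at most $\Delta\cdot\diam_2(P)$, meets at most one of them. A segment crosses a line only inside its $\vec{u}^\perp$-projection, an interval of length $O(\Delta)\cdot\diam_2(P)$, so $\int(\#\text{segments met})$, integrated over the perpendicular coordinate, is at most $\sum_j\ell_j=O(h\Delta)\cdot\diam_2(P)$; by the pigeonhole principle one track is crossed by only $O(h/\ell)$ segments. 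The path then splits into two monotone pieces: a \emph{reach} phase that brings $s$ to the chosen track, and a \emph{follow} phase along that track to $\partial P_0$, each detouring (cost $O(\Delta)$ per segment) around the segments it stabs. Setting the number of tracks $\ell\approx\sqrt{h\Delta}$ balances the follow cost $O(h\Delta/\ell)$ against the reach cost, both becoming $O(\sqrt{h\Delta})\cdot\diam_2(P)$. The hypothesis $\Delta\ge h^{-1}$ is exactly what makes this choice of $\ell$ at least a constant and guarantees $\sqrt{h\Delta}\ge 1$, so the forward term $\diam_2(P)$ is absorbed.

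The main obstacle, and the step I expect to require the most care, is the reach phase: a priori the straight-ish monotone route from $s$ to a good track could be stabbed by as many as $\Omega(h)$ segments, which would only give the weaker detour bound $O(h\Delta)\cdot\diam_2(P)$ recovered by hitting \emph{every} hole. The crux is therefore an \emph{avoidance} estimate showing that $s$ can reach a good track while rounding only $O(\sqrt{h/\Delta})$ segments. I would obtain this by a second averaging argument: the direction $\vec{u}$ is itself chosen to minimize the number of segments stabbed within reach distance, exploiting that each segment is nearly parallel to—hence freely bypassed in—a constant fraction of directions, so that $\min_{\vec u}$ of the stabbing count is controlled by its average $O(\sqrt{h\Delta})$. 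Combining the direction choice (controlling the reach) with the track pigeonhole (controlling the follow) gives a $\vec{u}$-monotone path of total length $O(\sqrt{h\Delta})\cdot\diam_2(P)=O(h^{1/2}\Delta^{1/2})\cdot\diam_2(P)$, completing the argument; verifying that the two averaging steps can be made compatible (the same $\vec{u}$ serving both phases) is where the technical effort concentrates.
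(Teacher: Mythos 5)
There is a genuine gap at what you yourself identify as the crux: the ``avoidance estimate'' for the reach phase is never proved, and the averaging argument you sketch for it does not work. Your accounting charges $O(\Delta)$ per encountered segment, so you need the path to meet only $N=O(\sqrt{h/\Delta})$ segments; but your justification asserts that the minimum over directions $\vec{u}$ of the stabbing count is controlled by its average, claimed to be $O(\sqrt{h\Delta})$ --- note this is not even the quantity you need ($O(\sqrt{h\Delta})$ encounters at cost $\Delta$ each gives only $O(h^{1/2}\Delta^{3/2})$ after multiplying, and $O(\sqrt{h\Delta})\neq O(\sqrt{h/\Delta})$). Worse, the premise is false: a segment is nearly parallel to only a \emph{small} fraction of directions (an angle-$\theta$ cone has measure $\Theta(\theta)$, not $\Theta(1)$), and the average over directions of the number of segments stabbed by a ray from $s$ can be $\Theta(h)$. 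For instance, pack $h$ disjoint parallel segments of length $\Delta$ at vanishing spacing into a bundle of essentially zero width placed next to $s$: almost every ray aimed into the bundle crosses all $h$ segments, so no choice of direction based on averaging stabbing counts can give $N=O(\sqrt{h/\Delta})$ for straight rays, and you supply no argument for the dynamic greedy path (whose encounters are not governed by the stabbing number of a single line). The ``tracks'' pigeonhole handles the follow phase exactly as in Lemma~\ref{lem:Delta}, but the reach phase is precisely where Lemma~\ref{lem:Delta} loses (via Lemma~\ref{lem:upper} it only yields $O(h^{3/4}\Delta)$), and your proposal does not repair it.

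The paper's proof avoids bounding the number of encounters altogether, and needs no tracks or reach/follow split. It pigeonholes over segment \emph{orientations}: partition directions into $\ell=\lceil h^{1/2}\Delta^{1/2}\rceil$ angular wedges of aperture $\pi/\ell$, pick the wedge $W_{j^*}$ containing at most $h/\ell$ segment directions, and travel in the direction $\vec{v}$ \emph{perpendicular} to the axis of $W_{j^*}$, building the same greedy monotone path you describe. The length is then split not by counting encounters but by a two-way charging: detours along the (few) segments nearly perpendicular to $\vec{v}$ --- the only ones yielding negligible forward progress --- are charged against their total length, $(h/\ell)\cdot\Delta=O(\sqrt{h\Delta})$; every other edge $e$ of the path makes forward progress at least $|e|\sin(\pi/(2\ell))$, and since a monotone path's forward projections are disjoint and total at most $\diam_2(P)=1$, these edges contribute $O(\ell)=O(\sqrt{h\Delta})$ in all. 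This decomposition (sparse perpendicular class charged by length, everything else charged by projection) is the missing idea; with it, a path can round many segments cheaply, which is exactly the situation your per-encounter charge of $\Delta$ cannot handle.
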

\begin{proof}
We may assume, without loss of generality, that $\diam_2(P)=1$.
Denote the line segments by $a_ib_i$, for $i=1,\ldots, h$, such that $x(a_i)\leq x(b_i)$. Let $\ell=\lceil h^{1/2}\Delta^{1/2}\rceil$, and note that $\ell = \Theta(h^{1/2}\Delta^{1/2})$ when $\Delta\geq h^{-1}$. Partition the right halfplane (i.e., right of the $y$-axis) into $\ell$ wedges with aperture $\pi/\ell$ and apex at the origin, denoted $W_1,\ldots , W_\ell$. For each wedge $W_i$, let $\vec{w}_i\in \mathbb{S}$ be the direction vector of its axis of symmetry.

Partition the $h$ segments as follows: For $j=1,\ldots , \ell$, let $\mathcal{H}_j$ be the set 
of segments $a_ib_i$ such that $\overrightarrow{a_ib_i}$ is in $W_j$. 
Finally, let $\mathcal{H}_{j^*}$ be a set with minimal cardinality, that is, $|\mathcal{H}_{j^*}|\leq h/\ell = O(h^{1/2}/\Delta^{1/2})$.
Let $\vec{v}=\vec{w}_{j^*}^\perp$.
We construct a $\vec{v}$-monotone path $\gamma$ from $s$ to the outer boundary $\partial P_0$ as follows. Start in direction $\vec{v}$ until the path reaches a hole $a_ib_i$ at some point $p$. While $p\notin \partial P_0$, continue along $a_ib_i$ to one of the endpoints: 
to $a_i$ if $\vec{v} \cdot \overrightarrow{a_ib_i}\geq 0$, and to $b_i$ otherwise;
then continue in direction $\vec{v}$. By monotonicity, $\gamma$ visits every edge at most once. 

It remains to analyze the length of $\gamma$. We distinguish between two types of edges: let $E_1$ be the set of edges of $\gamma$ contained in $\mathcal{H}_{j^*}$, and $E_2$ be the set of all other edges of $\gamma$. The total length of edges in $E_1$ is at most the total length of all segments in $\mathcal{H}_{j^*}$, that is, 
\begin{align*}
\sum_{e\in E_1}|e|
&\leq |\mathcal{H}_{j^*}| \cdot \Delta
  \leq O(h^{1/2}/\Delta^{1/2})\cdot \Delta 
  = O(h^{1/2}\Delta^{1/2}).
\end{align*}

Every edge $e\in E_2$ makes an angle at least $\pi/(2\ell)$ with vector $\vec{v}$. Let $\mathrm{proj}(e)$ denote the orthogonal projection of $e$ to a line of direction $\vec{v}$. Then $|\mathrm{proj}(e)| \geq  |e| \sin (\pi/(2\ell))$.
By monotonicity, the projections of distinct edges have disjoint interiors. 
Consequently, $\sum_{e\in E_2} |\mathrm{proj}(e)| \leq \diam_2(P)=1$. 
This yields 
\begin{align*}
\sum_{e\in E_2} |e| 
&\leq \sum_{e\in E_2} \frac{|\mathrm{proj}(e)|}{\sin(\pi/(2\ell))}
 = \frac{1}{\sin(\pi/(2\ell))} \sum_{e\in E_2} |\mathrm{proj}(e)|\\
&= O(\ell) =O(h^{1/2} \Delta^{1/2}).
\end{align*}
Overall, $|\gamma| = \sum_{e\in E_1}|e|+  \sum_{e\in E_2}|e| = O(h^{1/2}\Delta^{1/2})$, as claimed.
\end{proof}

	\begin{figure}[htbp]
		\centering
		\includegraphics[width=0.6\textwidth]{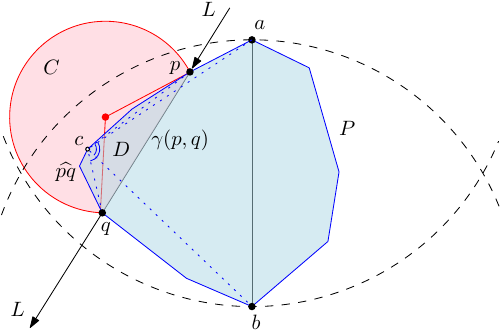}
		\caption{Line $L$ traverses a convex polygon $P$, but does not cross the segment $ab$.}
		\label{fig:dilation}
	\end{figure}
 
For extending Lemma~\ref{lem:segment} to arbitrary convex holes, we need the following technical lemma; refer to Fig.~\ref{fig:dilation}. 

\begin{lemma}\label{lem:tech}
Let $P$ be a convex polygon with a diametral pair $a,b\in \partial P$, where $|ab|=\diam_2(P)$. Suppose that a line $L$ intersects the interior of $P$, but does not cross the line segment $ab$. Let $p,q\in \partial P$ such that $pq=L\cap P$, and points $a$, $p$, $q$, and $b$ appear in this counterclockwise order in $\partial P$; and let $\widehat{pq}$ be the counterclockwise $pq$-arc of $\partial P$. Then $|\widehat{pq}|\leq \frac{4\pi\sqrt{3}}{9}\, |pq| < 2.42|pq|$.
\end{lemma}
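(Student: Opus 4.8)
My plan is to use the diametrality of $a,b$ to trap the whole arc inside a lens, and then reduce the inequality to an elementary optimization over chords of that lens.

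First I would exploit that $a,b$ is a diametral pair. Writing $D=\diam_2(P)=|ab|$, every point $x\in P$ satisfies $|xa|\le D$ and $|xb|\le D$, so $P$, and in particular the arc $\widehat{pq}$, is contained in the lens $\mathcal{L}=B(a,D)\cap B(b,D)$, where $B(c,r)$ is the disk of radius $r$ centred at $c$. This $\mathcal{L}$ is the classical \emph{vesica piscis}: its boundary is made of two circular arcs of radius $D$, one centred at $a$ and one at $b$, meeting at the two apexes $T_1,T_2$ (the third vertices of the equilateral triangles erected on $ab$); the points $a,b$ are the extreme points of $\mathcal{L}$ along the line $ab$ and lie on $\partial\mathcal{L}$; and $\per(\mathcal{L})=\tfrac{4\pi}{3}D$, each of the four quarter-arcs subtending a central angle $\tfrac{\pi}{3}$.

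Second I would use the ordering together with the non-crossing hypothesis to localise the arc. Since $a,p,q,b$ occur in this counterclockwise order and $L$ misses the segment $ab$, the cap $R$ cut off by $L$ (bounded by $\widehat{pq}$ and the chord $pq$) lies on the side of $L$ away from $ab$ and contains neither $a$ nor $b$. As $R\subseteq P\subseteq\mathcal{L}$ and $R$ is convex, $R$ sits inside the convex region $\mathcal{L}^{+}=\mathcal{L}\cap H$, where $H$ is the closed half-plane bounded by $L$ on the far side of $ab$; in particular $\widehat{pq}$ is a convex arc inside $\mathcal{L}^{+}$. I would then push the configuration to the boundary: by length-monotonicity of convex arcs under inclusion, for fixed endpoints the arc is longest when it runs along $\partial\mathcal{L}$, and maximising the ratio $|\widehat{pq}|/|pq|$ forces $p,q$ onto $\partial\mathcal{L}$. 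The ordering guarantees that the relevant boundary arc is the one avoiding both $a$ and $b$, hence at most one ``side'' of $\mathcal{L}$ (possibly crossing a single apex $T_i$). So it suffices to bound $|\widehat{pq}|/|pq|$ when $p,q\in\partial\mathcal{L}$ and $\widehat{pq}$ is the corresponding lens boundary arc.

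Finally I would carry out this one-parameter optimization. Parametrising $p$ and $q$ by their central angles on the two defining circles, $|\widehat{pq}|$ is a sum of two radius-$D$ sub-arcs while $|pq|$ follows from the law of cosines; after symmetrising about the axis of $\mathcal{L}$ this collapses to maximising a single-variable ratio. The extremal shape is a circular arc of central angle $\tfrac{4\pi}{3}$, for which the arc-to-chord ratio equals $\tfrac{(4\pi/3)}{2\sin(2\pi/3)}=\tfrac{4\pi\sqrt3}{9}$, and one checks that no admissible lens arc exceeds this value; hence $|\widehat{pq}|\le\tfrac{4\pi\sqrt3}{9}\,|pq|<2.42\,|pq|$.

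I expect the genuine difficulty to be the reduction in the second paragraph rather than the trigonometry in the third. One must argue carefully that hugging $\partial\mathcal{L}$ is the worst case \emph{while} keeping $a,b$ on $\partial P$ in the prescribed order, and one must handle the apex correctly, since the extra ``turning'' contributed by passing through a corner $T_i$ of $\mathcal{L}$ is exactly what allows the ratio to exceed the naive semicircular value; the degenerate endpoint configurations (e.g.\ $p\to a$, $q\to b$) also need separate attention so that the optimization is seen to stay below the stated constant.
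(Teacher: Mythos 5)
Your first step matches the paper's starting point (diametrality traps $P$ in the lens $\mathcal{L}=B(a,D)\cap B(b,D)$), but the reduction in your second paragraph has a genuine gap, and your third paragraph is inconsistent with the reduced problem you set up. The relaxed problem you propose to optimize --- convex caps $R$ with chord $pq\subseteq L$ and $R\subseteq\mathcal{L}\cap H$ --- has \emph{unbounded} arc-to-chord ratio: since $\diam_2(\mathcal{L})=\sqrt{3}\,D>D$, nothing in these constraints forbids a thin ``needle'' cap with $|pq|\to 0$ whose arc runs far up toward an apex $T_i$, giving $|\widehat{pq}|/|pq|\to\infty$. What rules out the needle in the true problem is not containment in $\mathcal{L}$ but convexity of $P$ together with $a,b\in P$: for each $c\in\widehat{pq}$ one has $\conv\{a,b,c\}\subseteq P$, which forces the cap to be wide at $c$; your relaxation discards exactly this. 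Moreover, perimeter monotonicity only yields $|pq|+|\widehat{pq}|\leq \per(\mathcal{L}\cap H)=|p'q'|+A'$, where $p'q'=L\cap\mathcal{L}$ may be much longer than $pq$, so it bounds the arc but not the ratio; hence ``maximising the ratio forces $p,q$ onto $\partial\mathcal{L}$'' does not follow. Finally, your claimed extremizer --- a circular arc of central angle $\frac{4\pi}{3}$ --- is not a sub-arc of $\partial\mathcal{L}$ at all: as you yourself note, $\partial\mathcal{L}$ consists of two radius-$D$ arcs of central angle $\frac{2\pi}{3}$ each (four quarter-arcs of $\frac{\pi}{3}$), and a direct computation shows that the supremum of arc-to-chord over admissible lens-boundary arcs is about $\frac{2\pi}{3}\approx 2.09$ (approached as $p\to a$, $q\to b$ with the arc passing over one apex), not $\frac{4\pi\sqrt{3}}{9}\approx 2.42$. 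So the stated optimal value cannot emerge from the optimization you describe.

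The missing idea --- and the paper's actual route --- is a pointwise inscribed-angle bound tied to the chord $pq$ rather than to the lens. Since $ab$ is the longest side of triangle $abc$, the angle at $c$ is the largest, so $\angle acb\geq \pi/3$ for every $c\in P$; and by convexity the counterclockwise order $a,p,c,q,b$ places the rays $ca$ and $cb$ inside the angle $\angle pcq$, whence $\angle pcq\geq\angle acb\geq\pi/3$ for every $c\in\widehat{pq}$. Therefore the cap bounded by $pq$ and $\widehat{pq}$ lies in the convex region bounded by $pq$ and the \emph{major} arc (central angle $\frac{4\pi}{3}$, radius $\frac{\sqrt{3}}{3}|pq|$) of the circle in which $pq$ subtends an inscribed angle of $\pi/3$; perimeter monotonicity of nested convex bodies then gives $|\widehat{pq}|\leq \frac{4\pi}{3}\cdot\frac{\sqrt{3}}{3}|pq|=\frac{4\pi\sqrt{3}}{9}|pq|$. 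This is where the $\frac{4\pi}{3}$-arc genuinely lives: on a circle determined by the chord $pq$, one for each chord, not on the fixed lens boundary. Your plan could likely be repaired by importing this angle condition as the constraint defining the feasible caps, but at that point it becomes the paper's proof.
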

\begin{proof}
Since $|ab| = \diam_2(P)$, then $P$ lies in the intersection of two disks of radius
$|ab|$ centered at $a$ and $b$, respectively. It follows that for any point $c\in P$, we
have $\angle a c b \geq \pi/3$. 
For any point $c\in \widehat{pq}$, we have $\angle pcq \geq \angle a c b$ by convexity,
whence $\angle pcq \geq \angle a c b \geq \pi/3$.
The locus of points $c$ where $(p,c,q)$ is a counterclockwise triple and $\angle pcq \geq \pi/3$ is the convex compact set $C$ bounded by $pq$ and a circular arc of radius $\frac{\sqrt{3}}{3}\, |pq|$. The length of the circular arc is $\frac{2\pi/3}{2\pi}=\frac23$ times the circumference of the circle, or $\frac23 \cdot 2\pi\cdot \frac{\sqrt{3}}{3}\, |pq| = \frac{4\pi\sqrt{3}}{9}\, |pq| < 2.42|pq|$. 

Let $D$ be the convex compact set bounded by $pq$ and $\widehat{pq}$. It is well known that for any two convex compact sets $C,D\subset \RR^2$, $C\subset D$ implies $\per(C)\leq \per(D)$ (e.g., by Crofton's formula). Consequently, $|\widehat{pq}|< 2.42\, |pq|$.
\end{proof}

\begin{lemma}\label{lem:Delta2}
Let $P\in \C(h)$ such that $\diam_2(P_i) \leq \Delta\cdot \diam_2(P)$ for every hole $P_i$. If $\Delta\geq h^{-1}$, then from every point $s\in P$, there exists a path of length $O(h^{1/2}\Delta^{1/2}) \cdot \diam_2(P)$ in $P$ to the outer boundary $\partial P_0$. 
\end{lemma}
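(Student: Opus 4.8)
The plan is to adapt the proof of Lemma~\ref{lem:segment} from line segments to convex holes, using Lemma~\ref{lem:tech} to replace each boundary-arc detour by its chord so that the same projection argument applies. Assume $\diam_2(P)=1$ and let $s$ be the origin. For each hole $P_i$ fix a diametral pair $a_i,b_i$, so $|a_ib_i|=\diam_2(P_i)\le\Delta$, and regard $\overrightarrow{a_ib_i}$ as the orientation of $P_i$. Exactly as in Lemma~\ref{lem:segment}, set $\ell=\lceil h^{1/2}\Delta^{1/2}\rceil=\Theta(h^{1/2}\Delta^{1/2})$ (using $\Delta\ge h^{-1}$), partition the directions into $\ell$ wedges of aperture $\pi/\ell$, assign each hole to the wedge containing its orientation, and let $\mathcal{H}_{j^*}$ be the least populated class, so $|\mathcal{H}_{j^*}|\le h/\ell=O(h^{1/2}/\Delta^{1/2})$. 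Put $\vec{v}=\vec{w}_{j^*}^{\perp}$ and choose coordinates with $\vec{v}$ the positive $x$-axis and $\vec{w}_{j^*}$ the positive $y$-axis.

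Next I would construct a $\vec{v}$-monotone path as before: advance in direction $\vec{v}$ until hitting a hole at an entry point $p$, then detour along a boundary arc of that hole to a point $q$ from which the path can resume in direction $\vec{v}$ without re-entering the hole, and repeat until reaching $\partial P_0$. Choosing the side of the arc so that $x$ is nondecreasing along it keeps the whole path $x$-monotone; consequently each hole is detoured at most once, and the $x$-projections of the radial segments together with those of the detours have pairwise disjoint interiors, so they sum to at most $\diam_2(P)=1$. In particular the total length of the radial segments is at most $1$.

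It remains to bound the detours, which I would split as in Lemma~\ref{lem:segment}. For the $O(h/\ell)$ holes in $\mathcal{H}_{j^*}$ I bound each detour crudely by the perimeter, $\per(P_i)\le\pi\,\diam_2(P_i)\le\pi\Delta$; their total is at most $\pi\Delta\cdot|\mathcal{H}_{j^*}|=O(\Delta\cdot h/\ell)=O(h^{1/2}\Delta^{1/2})$. For a hole $P_i\notin\mathcal{H}_{j^*}$, its orientation makes an angle at least $\pi/(2\ell)$ with $\vec{w}_{j^*}$, equivalently $a_ib_i$ makes an angle at most $\pi/2-\pi/(2\ell)$ with $\vec{v}$. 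Here I invoke Lemma~\ref{lem:tech}: routing the detour along the boundary arc on the far side of the diameter line of $P_i$ from the chord $pq$, its length is at most $2.42\,|pq|$. Writing $c_i=|pq|$ and letting $\mathrm{proj}(pq)$ be the projection of $pq$ onto the $x$-axis, the key point is that $c_i$ is bounded away from the vertical direction $\vec{w}_{j^*}$, so that $|\mathrm{proj}(pq)|\ge c_i\sin(\pi/(2\ell))$; I would derive this from the diameter $a_ib_i$ being tilted by at least $\pi/(2\ell)$ from $\vec{w}_{j^*}$ together with $q$ being the $x$-extreme vertex of $P_i$. Summing over these holes and using that their chord projections are disjoint and total at most $1$,
\[
\sum_{P_i\notin\mathcal{H}_{j^*}} 2.42\,c_i \;\le\; \frac{2.42}{\sin(\pi/(2\ell))}\sum |\mathrm{proj}(pq)| \;\le\; O(\ell)=O(h^{1/2}\Delta^{1/2}).
\]
Adding the three contributions (radial, $\mathcal{H}_{j^*}$, and non-$\mathcal{H}_{j^*}$) yields a path of length $O(h^{1/2}\Delta^{1/2})\cdot\diam_2(P)$, as claimed.

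The main obstacle is the final estimate: simultaneously ensuring that a single detour arc (i) stays $x$-monotone and clears the hole, (ii) lies on the far side of the diameter so that Lemma~\ref{lem:tech} bounds it by its chord, and (iii) has a chord whose direction is bounded away from $\vec{w}_{j^*}$, so the projection estimate applies. Each property is plausible from the orientation hypothesis, but reconciling all three requires a careful case analysis according to where the incoming ray meets $P_i$ relative to its diameter line---in particular when the ray crosses that line, where one must still argue that detouring to the $x$-extreme vertex produces a chord far from vertical.
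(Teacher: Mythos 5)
Your overall toolkit is the right one---the wedge partition of Lemma~\ref{lem:segment} plus the arc-versus-chord bound of Lemma~\ref{lem:tech}---but the execution has a genuine gap, and it sits exactly where you flag the ``main obstacle'': property (iii) is not just hard to verify, it is false. The direction of the detour chord $pq$ on a hole $P_i\notin\mathcal{H}_{j^*}$ is not controlled by the direction of the diametral segment $a_ib_i$. Concretely, in your coordinates ($\vec{v}$ horizontal, $\vec{w}_{j^*}$ vertical) let $P_i$ be an axis-aligned square: its diametral pair is a diagonal, at angle $\pi/4$ from $\vec{w}_{j^*}$, so for large $\ell$ the square lies far outside the wedge $W_{j^*}$ and need not belong to $\mathcal{H}_{j^*}$. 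Yet when the horizontal ray hits the left side of the square at a point $p$, your detour runs along that vertical side to a corner $q$ (this is forced: every interior point of the left edge re-enters the hole under motion in direction $\vec{v}$), and the chord $pq$ is \emph{exactly} parallel to $\vec{w}_{j^*}$, so $|\mathrm{proj}(pq)|=0$ while $c_i=|pq|$ can be as large as $\Delta/\sqrt{2}$. No inequality of the form $|\mathrm{proj}(pq)|\geq c_i\sin(\pi/(2\ell))$ can be derived from the orientation of $a_ib_i$, and with it falls the entire bound on the non-$\mathcal{H}_{j^*}$ detours; bounding those detours by perimeters instead gives only $O(h\Delta)$, which exceeds the target when $\Delta\gg h^{-1/3}$. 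A second, related defect: in your direct construction nothing guarantees that the chord $pq$ avoids crossing $a_ib_i$, and if it crosses, Lemma~\ref{lem:tech} does not apply at all (the arc on the wrong side of the diameter can be much longer than its chord).

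The paper sidesteps both problems by decoupling the two steps, and this is the idea your proposal is missing. It applies Lemma~\ref{lem:segment} \emph{verbatim} to the auxiliary domain $P'=P_0\setminus\bigcup_{i=1}^h a_ib_i$, whose holes are the diametral \emph{segments}. This yields a monotone path $\gamma$ with $|\gamma|\leq O(h^{1/2}\Delta^{1/2})$ that, by the construction inside Lemma~\ref{lem:segment}, follows any segment $a_ib_i$ it meets all the way to an endpoint, and hence never crosses any $a_ib_i$. Only afterwards is $\gamma$ repaired to lie in $P$: each maximal subpath $\gamma(p,q)$ whose interior lies in the interior of a hole $P_i$ is replaced by the boundary arc $\widehat{pq}$ of $\partial P_i$. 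Since $\gamma$ does not cross $a_ib_i$, neither does the chord $pq$, so Lemma~\ref{lem:tech} applies and gives $|\widehat{pq}|\leq 2.42\,|pq|\leq 2.42\,|\gamma(p,q)|$. Thus every detour is charged \emph{multiplicatively} against the length of the subpath it replaces---a global factor of at most $2.42$ on an already-bounded path---rather than additively against projections, so no angle estimate for the chords (your (iii)) and no case analysis at the diameter line are needed. Your proof cannot be completed as written; restructuring it into this two-stage argument (segment obstacles first, then arc patching) repairs it.
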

\begin{proof}
We may assume without loss of generality that $\diam_2(P)=1$.
For each hole $P_i$, $i\in \{1,\ldots, h\}$, choose a diametral point pair $a_i,b_i\in \partial P_i$ with $\diam_2(P_i)=|a_ib_i|$, and $x(a_i)\leq x(b_i)$.
Note that $P_i$ is a convex hole, and so the segment $a_ib_i$ is disjoint from the interior of $P$. By Lemma~\ref{lem:segment}, there exists a monotone path $\gamma$ of length $|\gamma|\leq O(h^{1/2}\Delta^{1/2})$ from $s$ to the outer boundary $\partial P_0$, that lies in $P_0$, and does not cross any of the segments $a_ib_i$.
The edges of $\gamma$ alternate between edges along segments $a_ib_i$, which lie in distinct holes, and edges between two distinct segments that are all parallel to a common direction vector $\vec{w}$. However, $\gamma$ may intersect some of the holes, so it does not necessarily lie in $P$.

We modify $\gamma$ as follows. For every maximal subpath $\gamma(p,q)$ whose interior lies in the interior of a hole $P_i$, we replace $\gamma(p,q)$ with a path $\widehat{pq}$ along $\partial P_i$. 
Note that the segments $pq$ and $a_ib_i$ do not cross: Indeed, suppose they cross. Then $\gamma(p,q)$ crosses $a_ib_i$ in the interior of $P_i$. However, when $\gamma$ reaches $a_ib_i$, it follows it to one of its endpoint, so $q\in \{a_i,b_i\}$, which is a contradiction. 

By Lemma~\ref{lem:tech}, $|\widehat{pq}|\leq 2.42\, |\gamma(p,q)|$. Consequently, all detours jointly increase the length of $\gamma$ by a factor of at most 2.42, and the resulting path has length $O(h^{1/2}\Delta^{1/2})$.
\end{proof}

\section{Polygons with Fat or Axis-Aligned Convex Holes}
\label{sec:fat}

In this section, we show that in a polygonal domain $P$ with fat convex holes, the distortion $\geod(s,t)/|st|$ is bounded by a constant for all $s,t\in P$, and prove Proposition~\ref{prop:fat}. 

Let $C$ be a convex body in the plane, and let $P=\RR^2\setminus C$ be its complement. For every pair of points $s,t\in \partial C=\partial P$, we compare the Euclidean distance $|st|$ with the geodesic distance $\geod(s,t)$ with respect to $P$, which is the shortest $st$-path along the boundary of $\partial C$. The \emph{geometric dilation} of $C$ is $\delta(C)=\sup_{s,t\in \partial C} \frac{\geod(s,t)}{|st|}$.

\begin{lemma} \label{lem:dilation}
  Let $C$ be a $\lambda$-fat convex body, $\lambda\in (0,1]$. Then $\delta(C) \leq \min\{\pi \lambda^{-1}, 2(\lambda^{-1} +1)\} = O(\lambda^{-1})$.
\end{lemma}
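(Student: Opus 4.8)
The plan is to reduce the dilation to the single scale-invariant quantity $\per(C)/\width(C)$, and then apply two standard perimeter estimates to recover the two terms of the claimed minimum.

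First I would observe that for $s,t\in\partial C$ the geodesic $\geod(s,t)$ is just the shorter of the two boundary arcs determined by $s$ and $t$, so $\geod(s,t)\le \tfrac12\per(C)$. To bound the perimeter in terms of the fatness I would use two inequalities. The first is the classical $\per(C)\le \pi\,\diam_2(C)$ already invoked elsewhere in the paper. The second comes from enclosing $C$ in the box of its minimal-width slab: $C$ lies in a rectangle of dimensions $\width(C)\times D$ with $D\le\diam_2(C)$, and since perimeter is monotone under inclusion among convex bodies, $\per(C)\le 2(\width(C)+\diam_2(C))$. Dividing each estimate by $\width(C)$ and using $\width(C)\ge\lambda\,\diam_2(C)$ yields $\per(C)/\width(C)\le \pi\lambda^{-1}$ from the first and $\per(C)/\width(C)\le 2(1+\lambda^{-1})$ from the second, which are precisely the two quantities in the minimum.

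It therefore remains to show that the supremum defining $\delta(C)$ is governed by $\per(C)/\width(C)$, i.e.\ that the relevant chord satisfies $|st|\ge\tfrac12\width(C)$. Pairs with small chord are harmless: as $s\to t$ the shorter arc is asymptotically the chord and the ratio tends to $1$, so the supremum is attained at a pair whose two arcs are long and roughly balanced. Concretely, I would argue that the maximizing pair may be taken to be a \emph{perimeter-halving pair}, for which $\geod(s,t)=\tfrac12\per(C)$, and then establish that every perimeter-bisecting chord of a convex body has length at least $\tfrac12\width(C)$. Granting this, $\delta(C)=\frac{\tfrac12\per(C)}{|st|}\le\frac{\per(C)}{\width(C)}\le\min\{\pi\lambda^{-1},\,2(\lambda^{-1}+1)\}$, as required.

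The main obstacle is exactly this last reduction, which splits into two technical points: that the maximum detour over a closed convex curve is attained at a halving pair (a compactness/variational statement about where $\geod(s,t)/|st|$ is stationary), and the convex-geometric estimate lower-bounding a perimeter-bisecting chord by half the width. A cleaner alternative that sidesteps the halving-pair argument would be to prove the pointwise inequality $\geod(s,t)\le \frac{\per(C)}{\width(C)}\,|st|$ for \emph{all} boundary pairs directly, bounding the shorter arc against its chord via the convex cap it cuts off; I expect verifying one of these two routes to be the crux of the write-up, while everything else is routine.
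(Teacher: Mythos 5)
Your proposal follows essentially the same route as the paper: reduce $\delta(C)$ to the ratio $\per(C)/\width(C)$ via a perimeter-halving pair, then apply the two perimeter estimates $\per(C)\leq \pi\,\diam_2(C)$ and $\per(C)\leq 2\left(\diam_2(C)+\width(C)\right)$ together with $\width(C)\geq \lambda\,\diam_2(C)$. The two technical points you correctly flag as the crux are precisely the known results the paper cites rather than proves---namely $\delta(C)=\per(C)/(2h(C))$, where $h(C)$ is the minimum halving distance~\cite[Lemma~11]{EGK07}, and $h(C)\geq \width(C)/2$~\cite[Thm.~8]{DEG+07}---so your outline closes completely once these citations are supplied (only note, in passing, that your heuristic that the detour ratio tends to $1$ as $s\to t$ fails at polygon corners, but it plays no role once the halving-pair identity is invoked).
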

\begin{proof}
It is known~\cite[Lemma~11]{EGK07} that $\delta(C) = \frac{|\partial C|}{2h}$, where $h=h(C)$
  is the \emph{minimum halving distance} of $C$ (\ie, the minimum distance between two points on $C$ that divide the length of $C$ in two equal parts).
  It is also known~\cite[Thm.~8]{DEG+07} that $h \geq \width(C)/2$.
  Putting these together one deduces that $\delta(C) \leq \frac{|\partial C|}{\width(C)}$.
  The isoperimetric inequality $|\partial C| \leq \diam_2(C) \pi$ and the obvious inequality $|\partial C| \leq 2\diam_2(C) + 2\width(C)$ lead to the following
  dilation bounds: $\delta(C) \leq \pi\, \frac{\diam_2(C)}{\width(C)}$ and $\delta(C) \leq 2\left(\frac{\diam_2(C)}{\width(C)}+1\right)$; see also~\cite{DEG+07,SA00}. Since $C$ is $\lambda$-fat, direct substitution yields the two bounds in the lemma;
the latter bound is better for small~$\lambda$.
\end{proof}

\begin{corollary}\label{cor:dilation}
Let $P=P_0\setminus \left(\bigcup_{i=1}^h P_i\right)$ be a polygonal domain, where $P_0$ is a convex polygon and $P_1,\ldots, P_h$ are $\lambda$-fat convex polygons for some $\lambda\in (0,1]$. Then for every $s,t\in P$, we have 
$\geod(s,t)\leq O(\lambda^{-1} |st|)$.
\end{corollary}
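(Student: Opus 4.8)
The plan is to take the straight segment $st$ and detour around each hole that it pierces, charging the extra length of each detour to the corresponding chord via the dilation bound of Lemma~\ref{lem:dilation}. Since every hole $P_i$ is convex and the holes are pairwise disjoint, the open segment $st$ meets the interiors of the holes in a family of pairwise disjoint subsegments (chords) $p_1q_1,\ldots,p_mq_m$, each of the form $p_jq_j=st\cap P_{i_j}$ with $p_j,q_j\in\partial P_{i_j}$. Being disjoint subsegments of $st$, they satisfy $\sum_{j=1}^m |p_jq_j|\leq |st|$.

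Next I would construct an $st$-path $\gamma\subseteq P$ by retaining the portions of $st$ that avoid all holes and replacing each forbidden chord $p_jq_j$ with the shorter boundary arc of $\partial P_{i_j}$ from $p_j$ to $q_j$. Because the holes are pairwise disjoint and contained in $\interior(P_0)$, each such arc lies on $\partial P$, and in particular enters the interior of no other hole; hence $\gamma$ is a valid path in $P$ and $\geod(s,t)\leq|\gamma|$.

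To bound $|\gamma|$, apply Lemma~\ref{lem:dilation} to each $\lambda$-fat hole $P_{i_j}$: the replacement arc, being a boundary geodesic between two points of $\partial P_{i_j}$, has length at most $\delta(P_{i_j})\,|p_jq_j|\leq O(\lambda^{-1})\,|p_jq_j|$. Summing over all detours and adding the untouched portions of $st$ gives
\[
|\gamma|\;\leq\;\Big(|st|-\sum_{j}|p_jq_j|\Big)+O(\lambda^{-1})\sum_{j}|p_jq_j|\;\leq\;|st|+O(\lambda^{-1})\sum_{j}|p_jq_j|\;\leq\;O(\lambda^{-1})\,|st|,
\]
where the final inequality uses $\sum_j|p_jq_j|\leq|st|$.

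The only delicate points are degenerate configurations---when $st$ is tangent to a hole, passes through a vertex, or when $s$ or $t$ lies on some $\partial P_i$---but these merely affect the endpoints of individual chords and leave the estimate intact. I expect the main (though modest) obstacle to be arguing cleanly that substituting a boundary arc for a chord keeps $\gamma$ inside $P$, \ie, that the detour around one hole never meets another; this is exactly the pairwise-disjointness argument already used in Lemmas~\ref{lem:unit} and~\ref{lem:Delta2}, now combined with the dilation bound in place of the cruder perimeter estimate $\per(P_i)\leq\pi\,\diam_2(P_i)$.
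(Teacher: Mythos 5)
Your proposal is correct and follows essentially the same route as the paper: replace each chord $st\cap P_i$ by the shortest boundary arc of the ($\lambda$-fat) hole, bound each detour by $O(\lambda^{-1})$ times the chord length via Lemma~\ref{lem:dilation}, and sum using $\sum_j |p_jq_j|\leq |st|$. Your additional remarks on disjointness of the detours and degenerate tangencies are fine but only make explicit what the paper's proof leaves implicit.
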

\begin{proof}
If the line segment $st$ is contained in $P$, then $\geod(s,t)=|st|$, and the proof is complete. Otherwise, segment $st$ is the concatenation of line segments contained in $P$ and line segments $p_iq_i\subset P_i$ with $p_i,q_i\in \partial P_i$, for some indices $i\in \{1,\ldots , h\}$. By replacing each segment $p_iq_i$ with the shortest path on the boundary of the hole $P_i$, we obtain an $st$-path $\gamma$ in $P$. Since each hole is $\lambda$-fat, we replaced each line segment $p_iq_i$ with a path of length $O(|p_iq_i|/\lambda)$ by Lemma~\ref{lem:dilation}. Overall, we have $|\gamma|\leq O(|st|/\lambda)$, as required.
\end{proof}

\begin{corollary}
If $P=P_0\setminus \left(\bigcup_{i=1}^h P_i\right)$ be a polygonal domain, where $P_0$ is a convex polygon and $P_1,\ldots, P_h$ are $\lambda$-fat convex polygons for some $\lambda\in (0,1]$, then 
$\diam_g(P)\leq O(\lambda^{-1} \diam_2(P))$, hence $\varrho(P)\leq O(\lambda^{-1})$.
\end{corollary}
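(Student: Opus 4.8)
The plan is to obtain this bound directly from Corollary~\ref{cor:dilation}, which already controls the geodesic distance between \emph{every} pair of points, by passing from a pointwise statement to a statement about diameters. First I would recall that for every $s,t\in P$, Corollary~\ref{cor:dilation} gives $\geod(s,t)\leq O(\lambda^{-1})\cdot |st|$, where the implied constant is absolute and in particular independent of the chosen pair $s,t$. This uniformity is the only point that needs a moment's care: because the multiplicative factor does not depend on $s,t$, I can safely take a supremum over all pairs without the constant degrading.

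Next I would bound $|st|$ by the Euclidean diameter. Since $|st|\leq \diam_2(P)$ holds by the very definition of $\diam_2(P)=\sup_{s,t\in P}|st|$, the pointwise dilation estimate upgrades to the uniform bound $\geod(s,t)\leq O(\lambda^{-1})\cdot \diam_2(P)$, valid simultaneously for all $s,t\in P$. Taking the supremum over $s,t\in P$ on the left-hand side and using $\diam_g(P)=\sup_{s,t\in P}\geod(s,t)$ then yields $\diam_g(P)\leq O(\lambda^{-1})\cdot \diam_2(P)$. Dividing through by $\diam_2(P)$ and invoking the definition $\varrho(P)=\diam_g(P)/\diam_2(P)$ from~\eqref{eq:rho} gives the claimed $\varrho(P)\leq O(\lambda^{-1})$.

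I do not expect any genuine obstacle at this step, and I want to be honest about that: all of the geometric content has already been extracted earlier. The essential fact that a $\lambda$-fat convex body has geometric dilation $O(\lambda^{-1})$ is established in Lemma~\ref{lem:dilation}, and the passage from a single obstacle to a polygonal domain with $h$ fat holes---splitting the segment $st$ at the holes it pierces and rerouting each crossing along the boundary of the corresponding hole---is carried out in Corollary~\ref{cor:dilation}. The present corollary is therefore essentially a reformulation of Corollary~\ref{cor:dilation} at the level of diameters rather than individual point pairs, obtained purely by taking a supremum and normalizing by $\diam_2(P)$; no new construction or estimate is required.
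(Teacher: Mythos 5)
Your proof is correct and matches the paper's intent exactly: the paper states this corollary without proof precisely because it follows immediately from Corollary~\ref{cor:dilation} by bounding $|st|\leq \diam_2(P)$ and taking the supremum over all pairs $s,t\in P$, which is exactly your argument. Your remark on the uniformity of the implied constant is the right (and only) point requiring care, and it holds since Lemma~\ref{lem:dilation} gives the explicit pair-independent bound $\min\{\pi\lambda^{-1},\,2(\lambda^{-1}+1)\}$.
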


\begin{proposition} \label{prop:axis-aligned}
Let $P\in \C(h)$, $h\in \NN$, such that every hole is an axis-aligned rectangle. Then from every point $s\in P$, there exists a path of length at most $\diam_2(P)$ in $P$ to the outer boundary $\partial P_0$. 
\end{proposition}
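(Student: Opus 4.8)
The plan is to reach $\partial P_0$ along a path that is monotone in a fixed diagonal direction $\vec u\in\mathbb{S}^1$, say $\vec u=(1,1)/\sqrt2$, exploiting the fact that an axis-aligned rectangle can be circumvented by an axis-parallel detour whose two legs both have nonnegative $\vec u$-component. Concretely, I would run a sweep analogous to $\greedy_P(s,\vec u)$: start at $s$, move in direction $\vec u$ until the ray meets a hole $P_i$ on one of its lower-left edges, then follow $\partial P_i$ monotonically in $\vec u$ (rightward along a horizontal edge, upward along a vertical edge) until direction $\vec u$ is again unobstructed, and continue. Because every elementary move is rightward, upward, or along $\vec u$, the resulting path $\gamma$ is monotone in both coordinate axes; in particular it is simple and terminates on $\partial P_0$.

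Two facts would drive the length estimate. First, $\gamma$ is $\vec u$-monotone, so its $\vec u$-extent is at most $\diam_2(P)$. Second---and this is where axis-alignment is essential---each detour can be taken so that it \emph{rejoins} the sweep ray at the point where that ray would have left the rectangle, since for a rectangle the boundary arc between the entry and exit points of a ray consists of one horizontal and one vertical leg meeting at a single corner. Thus the signed perpendicular displacement of $\gamma$ from the ray returns to zero after each rectangle, and the perpendicular total variation telescopes over the holes rather than accumulating. I would then decompose $|\gamma|$ as the length of the ray plus the sum over holes of the excess $(\text{detour length})-(\text{chord length})$, and bound the latter using that the chords are interior-disjoint sub-segments of the ray (so their total length is at most $\diam_2(P)$) and that each axis-parallel detour has the same horizontal and vertical extents as the chord it replaces.

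The main obstacle is controlling the total detour excess so that the final constant is exactly $1$, i.e.\ $|\gamma|\le\diam_2(P)$ rather than the $\sqrt2\,\diam_2(P)$ that a naive rectilinear staircase would give. The difficulty concentrates in rectangles the ray crosses ``the long way,'' entering and leaving through two opposite edges: there the monotone boundary arc may be forced far past the chord, and its excess is no longer charged cleanly to the chord length. I expect to resolve this by choosing the diagonal direction $\vec u$ (and, for each hole, the side on which to detour) so that the path follows the ``aligned gaps'' between holes---the feature of axis-parallel obstacles that prevents the spiralling behind the $\Omega(h^{1/3})$ lower bound---and then arguing that, after these choices, every crossing is of the adjacent-edge type, for which the excess equals $(\cos\theta+\sin\theta-1)$ times the chord length and hence tends to $0$ as $\vec u$ approaches a coordinate axis. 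Verifying that such a direction exists for every configuration of holes, and that the monotone sweep in that direction never crosses a hole through opposite edges, is the crux of the argument.
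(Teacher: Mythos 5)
Your first paragraph is sound and close in spirit to the paper's construction: a path that is monotone in both coordinate axes is simple, terminates on $\partial P_0$, and has length at most its horizontal extent plus its vertical extent. But the proposal has a genuine gap, and it is exactly where you locate the ``crux'': your sweep direction is fixed independently of where $s$ sits, and then the constant $1$ is simply unattainable. Concretely, take $P_0=[0,a]^2$ and $s$ near the bottom-left corner, with thin rectangular holes forcing the up-right monotone path to zigzag across the whole box; its length approaches $a+b=2a=\sqrt2\cdot\diam_2(P)$, and no amount of per-hole excess accounting can beat this, because the $\sqrt2$ loss here comes from the extents themselves, not from detour overhead. Your proposed repair---tuning $\vec u$ toward a coordinate axis so that every crossing is of adjacent-edge type---cannot be carried out: for \emph{every} fixed $\theta\in(0,\pi/2)$, a rectangle that is sufficiently wide and flat (height less than width times $\tan\theta$ is the adjacent case; the opposite case needs height \emph{larger} than that, which a tall thin rectangle provides for large $\theta$ and a wide flat one forces as $\theta\to 0$ in the transposed sense) is crossed through opposite edges, so the two requirements---small $\theta$ to kill the excess factor, and no opposite-edge crossings---pull against each other and no single direction works for all configurations. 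Moreover, the ``rejoin the chord's exit point'' step contradicts monotonicity in precisely the opposite-edge cases, and since the greedy path never actually enters a hole, the whole chord/excess telescoping is bookkeeping for a problem the path does not have.

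The missing idea is a one-line reduction on the \emph{position of $s$}, which is how the paper proceeds: let $B=[0,a]\times[0,b]$ be a minimal axis-aligned bounding box of $P$ with $b\le a$, and reflect coordinates so that $x(s)\ge a/2$ and $y(s)\ge b/2$, i.e., $s$ lies in the quadrant of $B$ nearest the top-right corner. Then run the plain axis-parallel staircase: go right until blocked, turn $90^\circ$ to go up, and alternate. Since every hole is an axis-aligned rectangle, each blocked direction frees the other, the path is monotone in both axes, and
$|\gamma|\le \bigl(a-x(s)\bigr)+\bigl(b-y(s)\bigr)\le \frac{a+b}{2}\le a\le \diam_2(P)$,
where $a\le\diam_2(P)$ because $B$ is minimal. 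Note that with this reduction in hand, even your diagonal sweep works verbatim, since any segment of length $L$ with direction in the closed first quadrant satisfies $L\le \Delta x+\Delta y$; the entire direction-tuning program in your third paragraph becomes unnecessary. So the proposal is salvageable, but as written it replaces the decisive (and elementary) quadrant reduction with an unresolved and, in fact, unresolvable case analysis.
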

\begin{proof}
Let $B=[0,a] \times [0,b]$ be a minimal axis-parallel bounding box containing $P$.
We may assume without loss of generality that $x(s) \geq a/2$, $y(s) \geq b/2$,
and $b \leq a$. We construct a staircase path $\gamma$ as follows. Start from $s$ in horizontal direction $\vec{d}_1=(1,0)$ until the path reaches  the boundary $\partial P$ at some point $p$. While $p\notin \partial P_0$, make a $90^\circ$ turn from $\vec{d}_1=(1,0)$ to $\vec{d}_2=(0,1)$ or vice versa, and continue. We have $|\gamma| \leq \frac{a+b}{2} \leq a \leq \diam_2(P)$,
as claimed.
\end{proof}

\section{Polygons with Holes versus Triangulations}
\label{sec:tri}

In section we prove Theorem~\ref{thm:tri}: the proof is the combination of Lemmas~\ref{lem:tri} and~\ref{lem:delaunay} below. 

\begin{lemma}\label{lem:tri}
For every triangulation $T\in \mathcal{T}(n)$, there exists a polygonal domain $P\in \C(h)$ with $h=\Theta(n)$ holes such that $\varrho(P)=\Theta(\varrho(T))$.
\end{lemma}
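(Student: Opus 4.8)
The plan is to realize the triangulation $T$ as the one-skeleton of a polygonal domain, by filling each triangular face with a slightly shrunken triangular hole. First I would set the outer polygon $P_0=\conv(V(T))$ and, for each bounded triangular face $t_i$ of $T$, insert a triangular hole $P_i$ obtained by scaling $t_i$ inward so that $\partial P_i$ lies in the $\eps$-neighborhood of $\partial t_i$, for a parameter $\eps>0$ to be taken sufficiently small. The resulting domain $P=P_0\setminus\bigcup_i P_i$ has its free space concentrated in thin corridors of width $\Theta(\eps)$ running along the edges of $T$, meeting in the $\eps$-neighborhoods of the vertices of $T$. By Euler's formula an $n$-vertex triangulation has $\Theta(n)$ faces, so $P\in\C(h)$ with $h=\Theta(n)$. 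Since $V(T)\subseteq P\subseteq P_0=\conv(V(T))$, the Euclidean diameter is preserved, $\diam_2(P)=\diam_2(T)$, and it suffices to prove $\diam_g(P)=\Theta(\diam_g(T))$.

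For the direction $\diam_g(P)\le O(\diam_g(T))$, I would take a diametral pair $s,t\in P$ and a shortest $st$-path $\gamma$. Because the free space is confined to the corridors, $\gamma$ threads through the $\eps$-neighborhoods of a sequence of vertices $v_1,\dots,v_k$ of $T$, and $s,t$ themselves lie in corridors incident to $v_1$ and $v_k$. All edges and vertices of $T$ belong to $P$, so the true shortest $v_1v_k$-path in $T$ is available inside $P$, giving $\geod_P(v_1,v_k)\le\geod_T(v_1,v_k)\le\diam_g(T)$; and each of the two end-stubs, from $s$ to the neighborhood of $v_1$ and from the neighborhood of $v_k$ to $t$, costs at most $\diam_2(P)=\diam_2(T)\le\diam_g(T)$. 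Concatenating these yields $\geod_P(s,t)\le 3\,\diam_g(T)+O(\eps)$.

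For the reverse direction $\diam_g(T)\le O(\diam_g(P))$, I would take a diametral vertex pair $u,v\in V(T)$ and a shortest $uv$-path $\gamma$ in $P$, so $|\gamma|\le\diam_g(P)$. Again $\gamma$ visits vertex-neighborhoods $v_1=u,\dots,v_k=v$, and the subpath $\gamma_i$ between consecutive neighborhoods runs along the corridor of the edge $v_iv_{i+1}$, whence $|v_iv_{i+1}|\le|\gamma_i|+2\eps$. Then $(v_1,\dots,v_k)$ is a walk in $T$, so
\[
\geod_T(u,v)\le\sum_{i=1}^{k-1}|v_iv_{i+1}|\le\sum_{i=1}^{k-1}(|\gamma_i|+2\eps)\le|\gamma|+2k\eps\le\diam_g(P)+O(n\eps).
\]
Since $k\le n$ and $\eps$ may be chosen arbitrarily small, this gives $\diam_g(T)\le O(\diam_g(P))$. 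Combining the two bounds yields $\diam_g(P)=\Theta(\diam_g(T))$ and therefore $\varrho(P)=\Theta(\varrho(T))$.

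The step I expect to be the main obstacle is justifying the corridor structure rigorously: that a shortest path in $P$ can pass from one corridor to another only through the $\eps$-neighborhood of their shared vertex, so that the ordered list of neighborhoods it visits is an honest walk in the one-skeleton of $T$ with consecutive entries adjacent in $T$. This forces $\eps$ to be smaller than the minimum feature size of $T$ (the least edge length, the least distance from a vertex to a non-incident edge, and the least angle), so that distinct corridors are disjoint except near shared vertices and the shrunken holes genuinely block any shortcut across a face. Granting this, the remaining $\eps$-bookkeeping is routine, since the number of visited vertices is $O(n)$ and all additive errors are $O(n\eps)$, which tend to $0$ as $\eps\to 0$.
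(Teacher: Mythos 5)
Your proposal matches the paper's proof essentially step for step: the same construction (shrunken triangular holes in each face of $T$, outer polygon $\conv(V(T))$, $h=\Theta(n)$ by Euler's formula), the same two-directional comparison of geodesic diameters via the sequence of visited vertex neighborhoods, and the same $\eps$-bookkeeping with errors $O(n\eps)$ absorbed by taking $\eps$ small. The corridor-structure issue you flag (that consecutive visited neighborhoods correspond to adjacent vertices, which forces $\eps$ below the minimum feature size of $T$) is in fact left implicit in the paper as well, so your explicit attention to it only strengthens the argument.
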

\begin{proof}
Given a triangulation $T\in \mathcal{T}(n)$, we construct a polygonal domain $P$ as follows. Let the outer polygon $P_0$ be the convex hull of $T$, and in the interior of each triangle $t_i$, create a triangular hole $P_i$ such that $\partial P_i$ is homothetic to $t_i$ and 
lies in the $\eps$-neighborhood of $t_i$ for some small $\eps>0$. A triangulation with $n$ vertices has at most $2n-5= O(n)$ triangular faces, and so $P\in \C(h)$ with $h=O(n)$ holes.

We claim that $\varrho(P)=\Theta(\varrho(T))$. Note that $\diam_2(P)=\diam_2(T)$ by construction, and so it is enough to prove that $\diam_g(P)=\Theta(\diam_g(T))$. Let $s,t\in P$ be a diametral pair, where $\geod_P(s,t)=\diam_g(P)$. By construction, $s$ and $t$ lie in the $\eps$-neighborhood of some edges $e_s,e_t\in E(T)$, respectively. Let $\gamma$ be a shortest $st$-path in $P$, and suppose that it passes through the $\eps$-neighborhoods of the vertices $(v_1,\ldots, v_k)$. Note that all edges and vertices of $T$ are contained in $P$. In particular, the shortest $v_1v_k$-path in $T$ is also contained in $P$, and so $\geod_P(v_1,v_k)\leq \geod_T(v_1,v_k)$. The geodesic distance between $s$ and the $\eps$-neighborhood of $v_1$ (respectively, $t$ and the $\eps$-neighborhood of $v_k$) is at most $\diam_2(T)\leq \diam_g(T)$. Overall, we obtain 
\[
    \geod_P(s,t)
   \leq \eps+|e_s|+\geod_T(v_1,v_k)+|e_t|+\eps
    \leq 3\, \diam_g(T)+2\eps 
    \leq O(\diam_g(T))
\]
if $\eps>0$ is sufficiently small, and so $\diam_g(P)\leq O(\diam_g(T))$.

Conversely, let $u,v\in V(T)$ be a diametral pair of vertices, i.e., $\geod_T(u,v)=\diam_g(T)$. 
Let $\gamma$ be a shortest $uv$-path in $P$. Clearly, we have $|\gamma|\leq \diam_g(P)$. 
Suppose that $\gamma$ intersects the $\eps$-neighborhoods of vertices $(v_1,\ldots , v_k)$, where $s=v_1$ and $t=v_k$. For $i=1,\ldots ,k-1$, denote by $\gamma_i$ the subpath of $\gamma$ between the $\eps$-neighborhoods of $v_i$ and $v_{i+1}$. Now $(v_1,\ldots , v_k)$ is an $st$-path in $T$ of length 
\[ \sum_{i=1}^{k-1} |v_iv_{i+1}| 
\leq \sum_{i=1}^{k-1}(|\gamma_i|+2\eps) 
< |\gamma| +  2k\eps
\leq O(\diam_g(P)+n\eps)
\leq O(\diam_g(P)), \]
if $\eps>0$ is sufficiently small. 
This implies $\diam_g(T)=\geod_T(u,v)\leq O(\diam_g(P))$,
and finally that $\varrho(P)=\Theta(\varrho(T))$, as required. 
\end{proof}

Every planar straight-line graph $G=(V,E)$ can be augmented to a triangulation $T=(V,E')$, with $E\subseteq E'$. One of the notable triangulations is the \emph{Constrained Delaunay Triangulation}, for short, $\mathrm{CDT}(G)$. 
Bose and Keil~\cite{BK06} proved that $\mathrm{CDT}(G)$ has bounded stretch for so-called \emph{visibility} edges. Specifically, if $u,v\in V$ and $uv$ does not cross any edge of $G$, then $\mathrm{CDT}(G)$ contains a $uv$-path of length $O(|uv|)$. The constant factor was later improved by Bose~\etal~\cite{BCR18,BoseFRV19}.

\begin{lemma}\label{lem:delaunay}
For every polygonal domain $P\in \C(h)$, there exists a triangulation $T\in \mathcal{T}(n)$ with $n=\Theta(h)$ vertices such that $\varrho(T)=\Theta(\varrho(P))$.
\end{lemma}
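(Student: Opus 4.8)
The plan is to reverse the construction of Lemma~\ref{lem:tri}: collapse each convex hole to a single line segment, and then realize the resulting ``segment domain'' inside a constrained Delaunay triangulation, so that the obstacle segments become triangulation edges that shortest paths must go around. Concretely, for each hole $P_i$ fix a diametral pair $a_i,b_i\in\partial P_i$ with $|a_ib_i|=\diam_2(P_i)$, and set $V=\{a_i,b_i:1\le i\le h\}\cup W$, where $W\subset\partial P_0$ is a set of $\Theta(h)$ points chosen so that $\conv(V)$ is a close convex approximation of $P_0$ that still contains every segment $a_ib_i$. Let $G$ be the planar straight-line graph on $V$ whose edges are the $h$ segments $a_ib_i$ (pairwise disjoint because the holes are), and let $T=\mathrm{CDT}(G)$; this triangulates $\conv(V)$, so $T\in\mathcal{T}(n)$ with $n=|V|=\Theta(h)$ and $\diam_2(T)=\diam_2(\conv(V))=\Theta(\diam_2(P))$. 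Introducing the intermediate segment domain $P'=\conv(V)\setminus\bigcup_{i=1}^h a_ib_i$, I would prove the lemma in two steps that each preserve the geodesic diameter up to a constant factor, namely $\diam_g(P')=\Theta(\diam_g(P))$ and $\diam_g(T)=\Theta(\diam_g(P'))$; combined with the matching Euclidean diameters this gives $\varrho(T)=\Theta(\varrho(P))$.

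For the first step, $\diam_g(P')\le O(\diam_g(P))$ is immediate, since $P\subseteq P'$ (removing fewer obstacles only shortens shortest paths) and every interior point of a former hole lies within $\diam_2(P_i)\le\diam_2(P)$ of $P$. The substantive inequality is $\diam_g(P)\le O(\diam_g(P'))$, and here Lemma~\ref{lem:tech} is exactly the right tool. Given a shortest path $\gamma'$ in $P'$, its straight pieces may pass through the interior of the convex holes, but since $\gamma'$ avoids every chord $a_ib_i$, each maximal sub-piece inside a hole $P_i$ is a chord $pq$ with $p,q\in\partial P_i$ that does not cross $a_ib_i$; replacing $pq$ by the boundary arc $\widehat{pq}$ of $\partial P_i$ yields an admissible path in $P$, and Lemma~\ref{lem:tech} bounds $|\widehat{pq}|\le 2.42\,|pq|$. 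Because the holes are disjoint these chords are disjoint sub-segments of $\gamma'$, so the total blow-up is a factor at most $2.42$, giving $\geod_P(s,t)\le 2.42\,\geod_{P'}(s,t)$ for all $s,t\in P$.

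For the second step I would invoke the stretch property of the constrained Delaunay triangulation~\cite{BK06}. On one hand, every edge of $T$ is a segment in $\conv(V)$ that crosses no constraint $a_ib_i$, so any shortest path in the graph $T$ is an admissible path in $P'$, whence $\geod_{P'}(u,v)\le\geod_T(u,v)$ for $u,v\in V$. On the other hand, a shortest path in $P'$ is a polygonal line whose turning points are segment endpoints $a_i,b_i$ or vertices of $\conv(V)$ — all in $V$ — and each of its straight edges $pq$ crosses no edge of $G$; hence by~\cite{BK06} the triangulation $T$ contains a $pq$-path of length $O(|pq|)$, and concatenation gives $\geod_T(u,v)\le O(\geod_{P'}(u,v))$. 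Taking the diametral vertex pair of $T$ on one side, and rounding a (possibly interior) diametral pair of $P'$ to nearby vertices of $V$ on the other, yields $\diam_g(T)=\Theta(\diam_g(P'))$; the rounding loses only an additive $O(\diam_2(P))$, which is negligible unless $\varrho(P)=O(1)$, a case settled by any fixed triangulation.

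The main obstacle is the first step — controlling the geodesic diameter when the two-dimensional holes are collapsed to their diameters — and the approach succeeds precisely because Lemma~\ref{lem:tech} turns each ``through the hole'' shortcut available in $P'$ into a boundary detour of only constant relative overhead, uniformly and additively over the disjoint holes. A secondary technical point is keeping the outer boundary at $O(h)$ vertices: since $P_0$ is convex, $\per(P_0)\le\pi\,\diam_2(P_0)$ and boundary detours are cheap, so a $\Theta(h)$-point convex approximation $\conv(V)$ of $P_0$ alters the geodesic diameter of the segment domain by only a $(1+o(1))$ factor as $h\to\infty$; for bounded $h$ one has $\varrho(P)=O(1)$ by Theorem~\ref{thm:main} and the statement is trivial.
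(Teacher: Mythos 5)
Your proposal is correct and follows essentially the same route as the paper's proof: the same diametral segments $a_ib_i$ as constraints, the same intermediate segment domain $P'$ with Lemma~\ref{lem:tech} giving the constant-factor equivalence $\dist_P=\Theta(\dist_{P'})$, and the same appeal to the Bose--Keil stretch bound for the constrained Delaunay triangulation, with diametral pairs rounded to nearby vertices at an additive $O(\diam_2(P))$ cost. The only (cosmetic) deviation is that you place $\Theta(h)$ auxiliary points on $\partial P_0$ where the paper simply adds the four corners of a bounding box of $P$ --- the paper's choice avoids the extra argument you need for approximating $P_0$ from inside.
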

\begin{proof}
Assume that $P=P_0\setminus \bigcup_{i=1}^h P_i$. For all $j=1,\ldots , h$, let 
$a_i,b_i\in \partial P_i$ be a diametral pair, that is, $|a_ib_i|=\diam_2(P_i)$. 
The line segments $\{a_ib_i: i=1,\ldots , h\}$, together with the four vertices of 
a minimum axis-aligned bounding box of $P$, form a planar straight-line graph $G$ with $2h+4$ vertices. 
Let $T=\mathrm{CDT}(G)$ be the constrained Delaunay triangulation of $G$. 

We claim that $\varrho(T)=\Theta(\varrho(P))$. We prove this claim in two steps. 
For an intermediate step, we define a polygon with $h$ line segment holes: 
$P'=P_0\setminus \bigcup_{i=1}^h \{a_ib_i\}$. For any point pair $s,t\in P$, denote by 
$\dist(s,t)$ and $\dist'(s,t)$, respectively, the shortest distance in $P$ and $P'$.
Since $P\subseteq P'$, we have $\dist'(s,t)\leq \dist(s,t)$. 
By Lemma~\ref{lem:tech}, $\dist(s,t)< 2.42 \cdot \dist'(s,t)$.
Thus  $\dist'(s,t)=\Theta(\dist(s,t))$ for all $s,t\in P$. 

Every point $s\in P$ lies in one or more triangles in $T$; let $s'$ denote a closest vertex of a triangle in $T$ that contains $s$. For $s,t\in P$, let $\dist''(s,t)$ be the length of the $st$-path $\gamma$ composed of the segment $ss'$, a shortest $s't'$-path in the triangulation $T$, and the segment $t't$. 

Since $\gamma$ does not cross any of the line segments $a_jb_j$, we have 
$\dist'(s,t)\leq \dist''(s,t)$ for any pair of points $s,t\in P$. Conversely, every vertex in the shortest $s't'$-path in $P'$ is an endpoint of an obstacle $a_jb_j$. Consequently, every edge is either an obstacle segment $a_jb_j$, or a visibility edge between the endpoints of two distinct obstacles. By the result of Bose and Keil~\cite{BK06}, for every such edge $pq$, $T$ contains a $pq$-path $\tau_{pq}$ of length $|\tau_{pq}|\leq O(|pq|)$. The concatenation of these paths is an $s't'$-path $\tau$ of length $|\tau|\leq O(\dist'(s',t'))$. Finally, note that the diameter of each triangle in $T$ is at most $\diam_2(P')$. Consequently, if $s,t\in P$ maximizes $\dist(s,t)$, then 
\[
    \dist''(s,t)=|ss'| + |\gamma| +|t't| \leq 2 \cdot \diam_2(P) + |\tau| \leq  O(\dist'(s't')),
\]
as required. Consequently, $\diam_g(T)=\Theta(\diam_g(P))$,
which in turn implies that $\varrho(T)=\Theta(\varrho(P))$.
\end{proof}

\section{Conclusion}
\label{sec:con}

We have shown that in a convex polygonal domain $P$ with $h$ convex holes, the distortion of the geodesic distance, $\varrho(P)=\frac{\diam_g(P)}{\diam_2(P)}$, is always $O(h^{1/2})$ and sometimes $\Omega(h^{1/3})$. Closing the gap between the upper and lower bounds remains an open problem. 
Generalizations to $d$-dimensional Euclidean spaces for $d\geq 3$ are left for future research. 
Improving the running times of algorithms for computing the geodesic diameter or radius of a polygon with $h$ holes when all holes and the outer polygon are convex remains as another interesting problem. 

\paragraph{Acknowledgments.} Research on this paper was partially supported by the NSF awards DMS~1800734 and DMS~2154347.

\bibliographystyle{plainurl}
\bibliography{references.bib}

@article{ABB16,
title={A Linear-Time Algorithm for the Geodesic Center of a Simple Polygon},
author={Hee-Kap Ahn and Luis Barba and Prosenjit Bose and Jean-Lou {De~Carufel} and Matias Korman and Eunjin Oh},
journal={Discrete \& Computational Geometry},
volume={56}, 
pages={836--859},
year={2016},
doi={10.1007/s00454-016-9796-0}
}

@article{BaeKO13,
  author    = {Sang Won Bae and
               Matias Korman and
               Yoshio Okamoto},
  title     = {The Geodesic Diameter of Polygonal Domains},
  journal   = {Discrete \& Computational Geometry},
  volume    = {50},
  number    = {2},
  pages     = {306--329},
  year      = {2013},
  doi       = {10.1007/s00454-013-9527-8}
}

@article{BaeKO19,
  author    = {Sang Won Bae and
               Matias Korman and
               Yoshio Okamoto},
  title     = {Computing the geodesic centers of a polygonal domain},
  journal   = {Comput. Geom.},
  volume    = {77},
  pages     = {3--9},
  year      = {2019},
  doi       = {10.1016/j.comgeo.2015.10.009}
}

@inproceedings{BK06,
  author    = {Prosenjit Bose and
               J. Mark Keil},
  title     = {On the Stretch Factor of the Constrained {D}elaunay Triangulation},
  booktitle = {Proc. 3rd IEEE Symposium on Voronoi Diagrams in Science and Engineering {(ISVD)}},
  pages     = {25--31},
  year      = {2006},
  _url       = {https://doi.org/10.1109/ISVD.2006.28},
  doi       = {10.1109/ISVD.2006.28},
  timestamp = {Wed, 16 Oct 2019 14:14:56 +0200},
  biburl    = {https://dblp.org/rec/conf/isvd/BoseK06.bib},
  bibsource = {dblp computer science bibliography, https://dblp.org}
}

@article{BCR18,
  author    = {Prosenjit Bose and
               Jean-Lou {De~Carufel} and
               Andr{\'{e}} {van~Renssen}},
  title     = {Constrained generalized {D}elaunay graphs are plane spanners},
  journal   = {Comput. Geom.},
  volume    = {74},
  pages     = {50--65},
  year      = {2018},
  _url       = {https://doi.org/10.1016/j.comgeo.2018.06.006},
  doi       = {10.1016/j.comgeo.2018.06.006},
  timestamp = {Fri, 30 Nov 2018 13:26:20 +0100},
  biburl    = {https://dblp.org/rec/journals/comgeo/BoseCR18.bib},
  bibsource = {dblp computer science bibliography, https://dblp.org}
}

@article{BoseFRV19,
  author    = {Prosenjit Bose and
               Rolf Fagerberg and
               Andr{\'{e}} {van~Renssen} and
               Sander Verdonschot},
  title     = {On Plane Constrained Bounded-Degree Spanners},
  journal   = {Algorithmica},
  volume    = {81},
  number    = {4},
  pages     = {1392--1415},
  year      = {2019},
  _url       = {https://doi.org/10.1007/s00453-018-0476-8},
  doi       = {10.1007/s00453-018-0476-8},
  timestamp = {Fri, 12 Apr 2019 09:24:09 +0200},
  biburl    = {https://dblp.org/rec/journals/algorithmica/BoseFRV19.bib},
  bibsource = {dblp computer science bibliography, https://dblp.org}
}

@article{Cabello19,
  author    = {Sergio Cabello},
  title     = {Subquadratic Algorithms for the Diameter and the Sum of Pairwise Distances
               in Planar Graphs},
  journal   = {{ACM} Trans. Algorithms},
  volume    = {15},
  number    = {2},
  pages     = {21:1--21:38},
  year      = {2019},
  _url       = {https://doi.org/10.1145/3218821},
  doi       = {10.1145/3218821},
  timestamp = {Fri, 27 Dec 2019 21:12:48 +0100},
  biburl    = {https://dblp.org/rec/journals/talg/Cabello19.bib},
  bibsource = {dblp computer science bibliography, https://dblp.org}
}

@inproceedings{CM99,
author = {Chiang, Yi-Jen and Joseph S. B. Mitchell},
title = {Two-Point {E}uclidean Shortest Path Queries in the Plane},
year = {1999},
booktitle = {Proc. 10th ACM-SIAM Symposium on Discrete Algorithms ({SODA})},
pages = {215--224},
url={https://dl.acm.org/doi/10.5555/314500.314560}
}

@article{DEG+07,
  author    = {Adrian Dumitrescu and
               Annette Ebbers-Baumann and
               Ansgar Gr{\"{u}}ne and
               Rolf Klein and
               G{\"{u}}nter Rote},
  title     = {On the geometric dilation of closed curves, graphs, and point sets},
  journal   = {Comput. Geom.},
  volume    = {36},
  number    = {1},
  pages     = {16--38},
  year      = {2007},
  _url       = {https://doi.org/10.1016/j.comgeo.2005.07.004},
  doi       = {10.1016/j.comgeo.2005.07.004},
  timestamp = {Thu, 11 Feb 2021 23:27:08 +0100},
  biburl    = {https://dblp.org/rec/journals/comgeo/DumitrescuEGKR07.bib},
  bibsource = {dblp computer science bibliography, https://dblp.org}
}

@inproceedings{DalirrooyfardLW21,
  author    = {Mina Dalirrooyfard and
               Ray Li and
               Virginia Vassilevska Williams},
  title     = {Hardness of Approximate Diameter: Now for Undirected Graphs},
  booktitle = {Proc. 62nd {IEEE} Symposium on Foundations of Computer Science ({FOCS})},
  pages     = {1021--1032},
  _publisher = {{IEEE}},
  year      = {2021},
  _url       = {https://doi.org/10.1109/FOCS52979.2021.00102},
  doi       = {10.1109/FOCS52979.2021.00102},
  timestamp = {Wed, 09 Mar 2022 12:12:54 +0100},
  biburl    = {https://dblp.org/rec/conf/focs/DalirrooyfardLW21.bib},
  bibsource = {dblp computer science bibliography, https://dblp.org}
}

@article{EGK07,
  author    = {Annette Ebbers-Baumann and
               Ansgar Gr{\"{u}}ne and
               Rolf Klein},
  title     = {Geometric dilation of closed planar curves: New lower bounds},
  journal   = {Comput. Geom.},
  volume    = {37},
  number    = {3},
  pages     = {188--208},
  year      = {2007},
  _url       = {https://doi.org/10.1016/j.comgeo.2004.12.009},
  doi       = {10.1016/j.comgeo.2004.12.009},
  timestamp = {Thu, 11 Feb 2021 23:27:21 +0100},
  biburl    = {https://dblp.org/rec/journals/comgeo/Ebbers-BaumannGK07.bib},
  bibsource = {dblp computer science bibliography, https://dblp.org}
}

@article{FW04,
title={Lost in a Forest},
author={Steven R. Finch and John E. Wetzel},
journal={The American Mathematical Monthly},
volume={111},
number={8},
year={2004}, 
pages={645--654},
doi={10.2307/4145038}
}

@article{GawrychowskiKMS21,
  author    = {Pawel Gawrychowski and
               Haim Kaplan and
               Shay Mozes and
               Micha Sharir and
               Oren Weimann},
  title     = {Voronoi Diagrams on Planar Graphs, and Computing the Diameter in Deterministic
               $\tilde{O}(n^{5/3})$ Time},
  journal   = {{SIAM} J. Comput.},
  volume    = {50},
  number    = {2},
  pages     = {509--554},
  year      = {2021},
  _url       = {https://doi.org/10.1137/18M1193402},
  doi       = {10.1137/18M1193402},
  timestamp = {Thu, 14 Oct 2021 09:34:56 +0200},
  biburl    = {https://dblp.org/rec/journals/siamcomp/GawrychowskiKMS21.bib},
  bibsource = {dblp computer science bibliography, https://dblp.org}
}

@article{GH89,
author={Leonidas J. Guibas and John Hershberger},
title={Optimal shortest path queries in a simple polygon},
journal={J. Comput. Syst. Sci.},
volume={39},
pages={126--152}, 
year={1989},
doi={10.1016/0022-0000(89)90041-X}
}

@article{Har15,
author={Sariel Har-Peled},
title={Shortest path in a polygon using sublinear space},
journal={J. Comput. Geom.},
volume={7},
pages={19--45}, 
year={2015},
doi={10.20382/jocg.v7i2a3}
}

@article{HS97,
author = {Hershberger, John and Suri, Subhash},
title = {Matrix Searching with the Shortest-Path Metric},
journal = {SIAM J. Computing},
volume = {26},
number = {6},
pages = {1612--1634},
year = {1997},
doi = {10.1137/S0097539793253577}
}

@article{HS99,
author = {Hershberger, John and Suri, Subhash},
title = {An Optimal Algorithm for {E}uclidean Shortest Paths in the Plane},
journal = {SIAM J. Computing},
volume = {28},
number = {6},
pages = {2215--2256},
year = {1999},
doi = {10.1137/S0097539795289604},
}

@article{KMM97,
title={An Efficient Algorithm for {E}uclidean Shortest Paths Among Polygonal Obstacles in the Plane},
author={Sanjiv Kapoor and S.~N. Maheshwari and Joeseph S. B. Mitchell},
journal={Discrete \& Computational Geometry},
volume={18}, 
pages={377--383},
year={1997},
doi={10.1007/PL00009323}
}

@article{KubelL21,
  author    = {David K{\"{u}}bel and
               Elmar Langetepe},
  title     = {On the approximation of shortest escape paths},
  journal   = {Comput. Geom.},
  volume    = {93},
  pages     = {101709},
  year      = {2021},
  _url       = {https://doi.org/10.1016/j.comgeo.2020.101709},
  doi       = {10.1016/j.comgeo.2020.101709},
  timestamp = {Fri, 13 Nov 2020 15:55:05 +0100},
  biburl    = {https://dblp.org/rec/journals/comgeo/KubelL21.bib},
  bibsource = {dblp computer science bibliography, https://dblp.org}
}

@article{LP84,
author={Der-Tsai Lee and Franco P. Preparata},
title={Euclidean shortest paths in the presence of rectilinear barriers},
journal={Networks},
volume={14},
pages={393--410}, 
year={1984},
doi={10.1002/net.3230140304}
}

@article{Mitchell96,
  author    = {Joseph S. B. Mitchell},
  title     = {Shortest paths among obstacles in the plane},
  journal   = {Int. J. Comput. Geom. Appl.},
  volume    = {6},
  number    = {3},
  pages     = {309--332},
  year      = {1996},
  doi       = {10.1142/S0218195996000216}
}

@incollection{Mitchell17,
author={Joseph S.B. Mitchell},
title={Shortest paths and networks},
chapter={31},
booktitle={Handbook of Discrete and Computational Geometry},
editors={J.E. Goodman and J. O'Rourke and C. D. T\'oth},
edition={3rd}, 
publisher={CRC Press},
address={Boca Raton, FL}, 
year={2017},
doi={10.1201/9781315119601}
}

@inproceedings{RodittyW13,
  author    = {Liam Roditty and
               Virginia Vassilevska Williams},
  _editor    = {Dan Boneh and Tim Roughgarden and Joan Feigenbaum},
  title     = {Fast approximation algorithms for the diameter and radius of sparse graphs},
  booktitle = {Proc. 45th Symposium on Theory of Computing Conference ({STOC})},
  pages     = {515--524},
  publisher = {{ACM}},
  year      = {2013},
  _url       = {https://doi.org/10.1145/2488608.2488673},
  doi       = {10.1145/2488608.2488673},
  timestamp = {Tue, 06 Nov 2018 11:07:06 +0100},
  biburl    = {https://dblp.org/rec/conf/stoc/RodittyW13.bib},
  bibsource = {dblp computer science bibliography, https://dblp.org}
}

@article{SA00,
author={Paul R. Scott and Poh Wah Awyong},
title={INEQUALITIES FOR CONVEX SETS},
journal={Journal of Inequalities in Pure and Applied Mathematics},
volume={1}, 
issue={1}, 
pages={article~6}, 
year={2000}
}

@article{Wang18,
  author    = {Haitao Wang},
  title     = {On the geodesic centers of polygonal domains},
  journal   = {J. Comput. Geom.},
  volume    = {9},
  number    = {1},
  pages     = {131--190},
  year      = {2018},
  doi       = {10.20382/jocg.v9i1a5}
}

@inproceedings{Wan21a,
  author    = {Haitao Wang},
  _editor    = {D{\'{a}}niel Marx},
  title     = {Shortest Paths Among Obstacles in the Plane Revisited},
  booktitle = {Proc. 32nd {ACM-SIAM} Symposium on Discrete Algorithms ({SODA})},
  pages     = {810--821},
  year      = {2021},
  doi       = {10.1137/1.9781611976465.51}
}

@article{Wang23,
  author       = {Haitao Wang},
  title        = {A New Algorithm for {E}uclidean Shortest Paths in the Plane},
  journal      = {J. {ACM}},
  volume       = {70},
  number       = {2},
  pages        = {11:1--11:62},
  year         = {2023},
  url          = {https://doi.org/10.1145/3580475},
  doi          = {10.1145/3580475},
  timestamp    = {Sun, 19 Jan 2025 14:36:58 +0100},
  biburl       = {https://dblp.org/rec/journals/jacm/Wang23.bib},
  bibsource    = {dblp computer science bibliography, https://dblp.org}
}

@book{ConvexFigures,
author={Isaak M. Yaglom and Vladimir G. Boltyanskii},
title={Convex Figures},
year ={1951},
note={Translated by P.J.~Kelly and L.F.~Walton, Holt, Rinehart and Winston, New York, 1961}
}

\end{document}